\patchcmd{\ps@pprintTitle}{\footnotesize\itshape
       Preprint submitted to \ifx\@journal\@empty Elsevier
       \else\@journal\fi\hfill\today}{\relax}{}{}
\newtheorem{theorem}{Theorem}[section]
\newtheorem{remark}{Remark}[section]
\begin{document}
\begin{frontmatter}

\title{Pass-efficient methods for compression of high-dimensional \\ turbulent flow data}

\author[colorado1]{Alec M. Dunton}
\ead{alec.dunton@colorado.edu}
\author[ctr,upc]{Llu\'is Jofre}
\ead{lluis.jofre@upc.edu}
\author[ctr]{Gianluca Iaccarino}
\ead{jops@stanford.edu}
\author[colorado2]{Alireza Doostan\corref{mycorrespondingauthor}}
\cortext[mycorrespondingauthor]{Corresponding author}
\ead{alireza.doostan@colorado.edu}
\address[colorado1]{Applied Mathematics, University of Colorado, Boulder, CO 80309, USA}
\address[ctr]{Center for Turbulence Research, Stanford University, Stanford, CA 94305, USA}
\address[upc]{Department of Fluid Mechanics, Universitat Polit\`ecnica de Catalunya - BarcelonaTech, Barcelona 08019, Spain}
\address[colorado2]{Smead Aerospace Engineering Sciences, University of Colorado, Boulder, CO 80309, USA}


\begin{abstract}
The future of high-performance computing, specifically on future Exascale computers, will presumably see memory capacity and bandwidth fail to keep pace with data generated, for instance, from massively parallel partial differential equation (PDE) systems.
Current strategies proposed to address this bottleneck entail the omission of large fractions of data, as well as the incorporation of $\textit{in situ}$ compression algorithms to avoid overuse of memory. To ensure that post-processing operations are successful, this must be done in a way that a sufficiently accurate representation of the solution is stored.
Moreover, in situations where the input/output system becomes a bottleneck in analysis, visualization, etc., or the execution of the PDE solver is expensive, the the number of passes made over the data must be minimized. In the interest of addressing this problem, this work focuses on the utility of pass-efficient, parallelizable, low-rank, matrix decomposition methods in compressing high-dimensional simulation data from turbulent flows. A particular emphasis is placed on using coarse representation of the data -- compatible with the PDE discretization grid -- to accelerate the construction of the low-rank factorization. This includes the presentation of a novel single-pass matrix decomposition algorithm for computing the so-called interpolative decomposition. The methods are described extensively and numerical experiments on two turbulent channel flow data are performed.
In the first (unladen) channel flow case, compression factors exceeding $400$ are achieved while maintaining accuracy with respect to first- and second-order flow statistics. In the particle-laden case, compression factors of 100 are achieved and the compressed data is used to recover particle velocities. These results show that these compression methods can enable efficient computation of various quantities of interest in both the carrier and disperse phases.

\end{abstract}

\begin{keyword}
Big-data compression; interpolative decomposition; low-rank approximation; particle-laden turbulence; randomized algorithm; single-pass algorithm
\end{keyword}

\end{frontmatter}


\nolinenumbers
\section{Introduction}	
\label{sec:intro}

The design of modern high-performance-computing (HPC) facilities is constrained by the balance required between financial budget, computing power, and energy consumption.
These constraints force system architects to make difficult trade-offs among supercomputer components, e.g., floating-point performance, memory capacity, interconnect speed, input/output (I/O), etc.
As predicted by Moore's~\cite{Moore1965-A} and Kryder's~\cite{walter2005kryder} laws, many features of supercomputers have improved substantially over the past few decades.
However, memory capacity and bandwidth have failed to keep pace with data generation capabilities. This trend is not reverting and will most likely augment in the near future.
For example, it is expected that the Exascale supercomputers to be deployed during the next decade will provide a 1000-10,000$\times$ increase in floating-point performance but only a 100$\times$ increase in memory availability and access speed~\cite{DoE2012-TR}. 

Flow solvers use random access memory (RAM), I/O, and disk space to store solution states at different times for subsequent restart and post-processing.
As the gap between data generation and storage performance has increased, numerical solvers have typically adapted by saving their state less often, viz. temporal or spatial sub-sampling. This can lead to the loss of important data, rendering it less useful in post-processing operations.
This problem is of particular importance in the case of turbulent flows, as the number of spatial and time integration resolutions required to capture all the flow scales in direct numerical simulation (DNS) increases exponentially with the Reynolds number, $Re$.
Extrapolating this trend to future supercomputing settings, storage subsystems may become considerably underpowered with respect to the number-crunching capacity. In this scenario, the affordable resulting data storage frequency will not be sufficient for conducting meaningful analyses.
A similar problem is encountered in outer-loop studies, such as inference, uncertainty quantification (UQ), and optimization, in which large ensembles of model evaluations for different input values are performed, resulting in a rapid growth of data storage requirements~\cite{Jofre2017-A,Fairbanks2020-A,Jofre2020-A}.
The storage capacity and bandwidth limitations also complicate the applicability of time-decoupled strong recycling turbulence inflow methods~\citep{Wu2017-A}, in which flow data for several characteristic integral times, e.g., eddy-turnover time in homogeneous isotropic turbulence (HIT) or flow through time (FTT) in wall-bounded flows, are stored to disk to be reused later as inflow in spatially developing flow problems. If the prediction described above materializes, flow solvers will need to pursue new strategies in which the data size at each time slice is reduced before writing to disk, a process known as data compression. 

Data compression can be divided into five main categories: lossless, near lossless, lossy, mesh reduction, and derived representations~\cite{lidata}. Focusing on the categories of lossless and lossy compression, the primary contrast between the two is that lossless algorithms guarantee reconstruction of compressed data without any loss of accuracy ---\thinspace within machine precision in the case of near lossless\thinspace--- whereas lossy algorithms do not. Due to their accuracy, lossless compression algorithms are more widely accepted in the scientific community for the purposes of data visualization, analysis, and compression. However, the guarantee of accuracy inherent in these methods comes at the cost of  limited compression ratios\cite{Engelson2000-A,Ratanaworabhan2006-A}. On the other hand, higher compression ratios can be obtained by using lossy data compression algorithms. This comes at the expense that the inverse transformation of the compressed data produces at best an approximation of the original data. 

There are numerous existing methods in truly lossless compression, wherein the reconstructed data is bit-for-bit identical to the original~\cite{lidata}. Examples include the well known method \textit{gzip}~\cite{gailly2003gzip}, as well as entropy-based coders~\cite{shannon2001mathematical,huffman1952method}, dictionary-based coders~\cite{ziv1977universal,ziv1978compression}, and predictive coders, e.g., FPC~\cite{burtscher2009fpc}, FCM~\cite{sazeides1997predictability}, and FPZIP~\cite{lindstrom2006fast}. In a related class of methods, near-lossless compressers, reconstructed data is not identical to the original data due to floating-point round-off errors. Examples from this class of approaches include transformation methods such as lossless Fourier and wavelet transform schemes~\cite{strang1996wavelets}. Because of the limited compression ratio attained by these methods coupled with the disk- and RAM-prohibitive magnitude of the data examined in Section~\ref{sec:results}, lossy compression methods for turbulent flow data are presented as a more appealing alternative. A more in-depth review of these strategies, as well as a more extensive list of sources can be found in~\cite{lidata}. 

As in the case of lossless compression techniques, there are several lossy compression approaches, including~\cite{lehmann2014situ,lehmann2016optimizing,tong2012salient,austin2016parallel,zhao2015time}. Notable lossy compression methods include bit truncation~\cite{lidata, gong2012mloc}, in which simulations are run using 64-bit floating point values but only 32-bit values are saved. Another lossy approach, quantization, entails floating-point values being converted into approximations with smaller cardinality, a fixed size following compression can be achieved, but without guarantees on error~\cite{lidata}. Predictive coding techniques rely on approximating data values using extrapolation from neighboring values. Examples from this class of methods include linear predictors such as \textit{Compvox}~\cite{fowler1994lossless}, SZ~\cite{di2016fast,tao2017significantly,liang2018error}, and \textit{Lorenzo}~\cite{ibarria2003out}, as well as spline-fitting predictors like \textit{Isabela}~\cite{lakshminarasimhan2011compressing,lehmann2014situ}. Transform-based compression methods involve computing the transform of the data, e.g., discrete Legendre transform~\cite{otero2018lossy,marin2016large}, discrete cosine transform (DCT) and wavelet transform, then storing the resulting coefficients in a manner that reduces memory footprint. Within this group of methods is ZFP~\cite{lindstrom2014fixed}, the Karhunen-Lo\'eve transform~\cite{loeve1978probability}, the Tucker decomposition for tensor data~\cite{hitchcock1927expression,tucker1966some,austin2016parallel}, and higher order methods based on it, e.g.,~\cite{vannieuwenhoven2012new,de2000best,kroonenberg1980principal}.

\subsection{Contribution of this work}

This work is concerned with temporal compression of large-scale fluid dynamics simulation data via low-rank matrix decomposition algorithms. Standard low-rank factorization methods for temporal compression of large-scale fluid dynamics simulation data, such as standard proper orthogonal decomposition (POD) or principle component analysis (PCA), are not well suited to high-dimensional data, due to their significant computational cost, high memory requirements, and limited parallel scalability~\cite{HalMarTro2011}. In addition, they are not {\it pass}-efficient in that they need to access/read the data multiple times. Due to the memory bottlenecks inherent in large-scale simulations, methods which minimize the number of passes made over a data matrix are emphasized. 

In this work, we examine the utility of four pass-efficient, low-rank factorization techniques for temporal compression of turbulent flow data. These include a blocked single-pass singular-value decomposition (SBR-SVD)~\cite{yu2017single}, a single-pass and two-pass variants of interpolative decomposition (ID)~\cite{cheng2005compression}. The methods enable low-rank approximation of flows without scalability issues nor bottlenecks in extension to higher dimensions. In building two of the ID methods, we propose using coarse grid (or grid sub-sampled) representations, a.k.a {\it sketch}, of the data (as an alternative to random projections), in order to accelerate the construction of the factorization. This particular choice of sketch enables a single-pass implementation of ID, which to the best of the authors' knowledge, is the first single-pass ID algorithm. The pass-efficiency of these methods becomes crucial when data sizes exceed memory available in RAM, as well as when the process of loading data into RAM becomes a computational bottleneck. In addition, we provide convergence analysis of the ID schemes relying on coarse grid data.

Similar to this work, the authors of~\cite{azaiez2019low} present matrix decomposition methods as an effective compression technique for large-scale simulation data, though pass-efficiency is not emphasized in their work. In~\cite{brand2006fast} and~\cite{zimmermann2018geometric}, the authors present online methods for maintaining a low-rank SVD approximation of simulation data via rank-one updates; this procedure requires significant computation at each step, however. Recent work by Tropp et al.~\cite{tropp2019streaming} addresses this concern, though in their framework matrix updates arrive as sparse or rank-one linear updates; in the single-pass framework presented in this work, updates are assumed to arrive as row vectors into RAM. Moreover, in this work, the temporal (row) dimension of the PDE data matrix does not need to be known {\it a priori}.

As the focus of this paper is achieving temporal compression, methods designed for spatial compression in simulations of turbulent flow such as mesh reduction~\cite{weiss2011simplex} and compressed sensing~\cite{Bourguignon2014-A} are left for future investigation. Another interesting future work is a formal comparison of temporal compression techniques based on low-rank factorization with the aforementioned (non-matrix) techniques, such as SZ and FPZIP. 

The rest of this manuscript is organized as follows.
In Section~\ref{sec:decomposition_algorithms}, low-rank factorization strategies for pass-efficient compression of high-dimensional data, including a novel single-pass ID, are described. Next, numerical results of their compression efficiency and reconstruction accuracy for the computation of flow and particle statistics are discussed in Section~\ref{sec:results}.
Finally, conclusions are drawn and future work is outlined in Section~\ref{sec:conclusions}.

\section{Low-rank decomposition methods for data compression}	\label{sec:decomposition_algorithms}
\subsection{Review of QR and SVD}	
\label{sec:qrsvd}

Let $\bm{A} \in \mathbb{R}^{m \times n}$ denote the data matrix of interest whose rows are time snapshots of flow data, e.g., pressure or velocity, as depicted in Figure~\ref{fig:A_schematic}. While in the below discussions we refer to this matrix, we note that in practice $\bm A$ is not formed explicitly due to the high-dimensionality of data, i.e., large $m$ and/or $n$. Instead rows of $\bm{A}$ are {\it processed} one at a time. The matrix decomposition methods presented in this work involve at their core two canonical decompositions: the QR decomposition and the SVD. QR factorization yields a decomposition of $\bm{A}$ (more precisely $\bm A^T$) of the form $\bm{A}$ = $\bm{QR}$, where $\bm{Q} \in \mathbb{R}^{m \times m}$ is a unitary matrix whose columns form an orthonormal basis for the column space of $\bm{A}$~\citep{Golub}.  This procedure is referred to as the range-finding step~\cite{HalMarTro2011} in the algorithms described later in the paper. The three main approaches for computing this decomposition include pivoted Gram-Schmidt orthonormalization of the columns or rows of $\bm A$, Householder reflections, and Givens rotations~\citep{Golub}. Of particular interest in this work is the rank-revealing QR algorithm, which relies on the full-pivoted Gram-Schmidt procedure~\citep{GuEis1996}. In the execution of this procedure, $k$ pivot columns are selected to form an approximate basis for the range of $\bm{A}$. The selection of these columns induces the concept of a numerical rank~\citep{HonPan1992}. A matrix $\bm{A}$ is said to be of numerical rank $k$ for some $\epsilon > 0$ if there exists a matrix $\bm{A}_k$ of rank $k$ such that $\Vert\bm{A} - \bm{A}_k\Vert \leq \epsilon$. This concept lies at the center of low-rank matrix decomposition methods. 

Also crucial to the compression methods explored is the singular value decomposition (SVD), defined as $\bm{A} = \bm{USV}^{T}$ (the transpose is left unconjugated because the data in all applications is real in this work). The matrices $\bm{U} \in \mathbb{R}^{m \times n}, \bm{V} \in \mathbb{R}^{n \times n}$ are unitary and their columns form orthonormal bases for the column and row spaces of $\bm{A}$, respectively. The matrix $\bm{S} \in \mathbb{R}^{n \times n}$ is a diagonal matrix whose entries are the singular values of $\bm{A}$. To generate low-rank approximations of a matrix $\bm{A}$, one may employ a truncated SVD, which yields a decomposition of the form $\bm{U}_k\bm{S}_k\bm{V}_k^{T}$, with $\bm{U}_k \in \mathbb{R}^{m \times k}, \bm{V}_k \in \mathbb{R}^{n \times k}$ and $\bm{S}_k \in \mathbb{R}^{k \times k}$. In this decomposition, the column spaces of $\bm{U}_k$ and $\bm{V}_k$ are approximations of the $k$-dimensional row and column subspaces of the matrix $\bm{A}$, taken in correspondence to its $k$ largest singular values, which are the entries $\sigma_1,...,\sigma_k$ of the $k \times k$ diagonal matrix $\bm{S}_k$. The product $\bm{D} = \bm{U}_k\bm{S}_k\bm{V}_k^{T}$ forms a rank-$k$ approximation of the matrix $\bm{A}$. By the Eckart-Young theorem~\cite{eckart1936approximation}, a truncated SVD is the theoretically best rank-$k$ approximation of a matrix $\bm{A}$ in Frobenius norm, i.e.,
\begin{equation}
\inf_{\text{rank}(\bm{D}) = k} \Vert\bm{A} - \bm{D}\Vert_F = \Vert\bm{A} - \bm{U}_k\bm{S}_k\bm{V}_k^{T}\Vert_F = \left(\sum_{j=k+1}^{\min(m,n)} \sigma_{j}^2 \right)^{1/2},
\end{equation}
where $\sigma_{j}$ is the $j^{th}$ largest singular value of $\bm{A}$. An analogous statement may be made in terms of the spectral norm (induced 2-norm), in which case the lower bound is
\begin{equation}
\inf_{\text{rank}(\bm{D}) = k} \Vert\bm{A} - \bm{D}\Vert_2 = \Vert\bm{A} - \bm{U}_k\bm{S}_k\bm{V}_k^{T}\Vert_2 = \sigma_{k+1}.
\end{equation}

\begin{figure}[t]
\label{fig:A_schematic}
  \centering
  \includegraphics[trim=80mm 80mm 80mm 70mm, clip,width=0.49\textwidth]{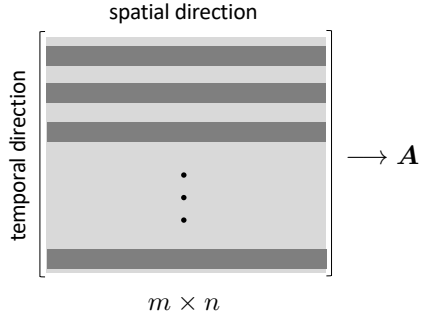}
  \put(-142,10){$m\times n$}
  \put(-68,65){$\longrightarrow\bm{A}$}
  \caption{Schematic of a PDE data matrix $\bm A$ with $m$ time solutions (rows) and $n$ spatial degrees of freedom. }	\label{fig:A_schematic}
\end{figure}

\subsection{Randomized algorithms}
\label{sec:random}

The first methods developed for computing a low-rank approximation of a matrix, e.g., via the SVD, are often computationally expensive, can require $\mathcal{O}(k)$ passes over the input data matrix, lend themselves to limited parallel scalablity, and are not designed to minimize memory movement~\cite{HalMarTro2011}. Moreover, when such schemes are built in a purely deterministic framework, adversarial cases can be introduced, such as those presented by Kahan~\cite{kahan1966numerical}. Developed in order to address these issues, randomized schemes have gained popularity in recent years in low-rank matrix factorizations. These methods rely on embedding the input matrix in a lower dimensional space via a random matrix $\bm\Omega\in\mathbb{R}^{n\times l}$, with $l\ll n$, 
\begin{equation}
\label{eq:sampmatrix}
\bm{A}\bm\Omega,
\end{equation}
referred to as randomized projection~\cite{MarRocTyg2006,liberty2007randomized,HalMarTro2011,Mar2016,yu2017single}. The resulting matrix $\bm{A}\bm\Omega$ is called a sketch matrix~\cite{HalMarTro2011}. The effectiveness of randomized matrix algorithms relies on the utility of the sketch matrix $\bm{A}\bm{\Omega}$. That is, they require that the column space of $\bm{A}\bm{\Omega}$ approximately spans the column space of $\bm{A}$. The theoretical underpinning of random projections in numerical linear algebra is the Johnson-Lindenstrauss Lemma~\cite{johnson1984extensions}, which roughly speaking states that $\mathcal{O}(n)$ points in a Euclidean space  may be randomly embedded in a $\mathcal{O}(\log(n))$-dimensional space such that pairwise distances between points are preserved. This result precipitated the introduction of random matrices as dimension reduction tools in numerical linear algebra~\cite{HalMarTro2011}. Matrices with i.i.d. Gaussian entries -- used exclusively in this work -- are a preeminent examples of such random projections,~\cite{MarRocTyg2006}, though the literature is rife with other techniques for matrix sketching.

Benefits of using randomized methods for generating low-rank approximation, more of which are enumerated in~\citep{Mar2016}, are:
\begin{itemize}
\item The cost of computing a $k$-rank approximation of $\bm{A}$ using deterministic methods, including some of those implemented in this work, requires $\mathcal{O}(mnk)$ operations. By using randomized methods this can be reduced to $\mathcal{O}(mn\log(k) + k^2(m + n))$ or better~\citep{HalMarTro2011}. 

\item Randomized methods require less communication than standard methods, which enables efficient implementation in low-communication environments such as graphics processing units (GPUs)~\citep{MarVor2016}.

\item Of particular interest to this study is that randomized methods allow (in some implementations) for single-pass compression of matrices, which means that the matrix can be compressed as it is streamed and never has to be stored in RAM in its entirety~\cite{HalMarTro2011,yu2017single}.
\end{itemize}


%
%
\subsection{Randomized SVD and single-pass algorithms}
\label{sec:r-svd}

A natural application of random projection is in the construction of SVD. Randomized SVD (R-SVD) methods approximate $\bm{A}$ in the form $\bm{A} \approx \bm{U}\bm{S}\bm{V}^{T}$ for a given target rank $k$ via two main stages. Notice that, for the interest of a simpler notation, we hereafter drop the subscript $k$ from $\bm U$, $\bm V$, and $\bm S$, but keep those in the Algorithms to facilitate their implementation. In the first stage of R-SVD, a basis $\bm{Q}\in\mathbb{R}^{m\times l}$ of the approximate column space of $\bm A$ is identified from the QR factorization of the sketch matrix $\bm{A\Omega}$, where $\bm\Omega\in\mathbb{R}^{n\times l}$ is, e.g., a Gaussian random matrix. Here $l=k+p$, where the so-called over-sampling parameter $p$ is a small number, e.g., $10$ or $20$. In the second stage, the SVD of the smaller matrix $\bm{B}=\bm{Q}^T \bm{A}\approx \tilde{\bm U}\bm{S}\bm{V}^T$ is computed. Recognizing that $\bm{A}\approx\bm{Q}\bm{B}$, the approximate SVD of $\bm{A}$ is given by $\bm{A}\approx(\bm{Q}\tilde{\bm U})\bm{S}\bm{V}^T$, i.e., $\bm{U}=\bm{Q}\tilde{\bm U}$. The details of these steps are described in Algorithm~\ref{alg:randsvd} reported from~\cite{HalMarTro2011}.

\begin{algorithm}[htb]
\caption{Basic Randomized SVD (R-SVD)) $\bm{A} \approx \bm{U}_k \bm{S}_k \bm{V}_k^T$}	\label{alg:randsvd}
\begin{algorithmic}[1]
\Procedure{R-SVD}{$\bm{A}$ $\in \mathbb{R}^{m \times n}$}
\State $k \gets$ target rank
\State $p \gets$ oversampling parameter
\State $l \gets k + p$
\State $\bm{\Omega} \gets randn(n,l)$
\State $\bm{QR} \gets  qr(\bm{A\Omega})$ 
\State $\bm{B} \gets \bm{Q}^T\bm{A}$
\State $\tilde{\bm{U}}$, $\bm{S}$, $\bm{V}
\gets svd(\bm{B})$
\State $\bm{U} \gets \bm{Q}\tilde{\bm{U}}$
\State $\bm{U}_k \gets \bm{U}(:,1:k)$; $\bm{S}_k \gets \bm{S}(1:k,1:k)$; $\bm{V}_k \gets \bm{V}(:,1:k)$
\State $\bm{return}$ $\bm{U}_k$, $\bm{S}_k$, $\bm{V}_k$
\EndProcedure
\end{algorithmic}
\end{algorithm}

More recently, multiple improvements of the R-SVD implementation in Algorithm~\ref{alg:randsvd} have been proposed to address pass and parallel efficiency of R-SVD, which is explained next. Firstly, Algorithm~\ref{alg:randsvd} requires two passes over the data matrix $\bm{A}$ (Steps 6 and 7), which makes it less attractive for the compression of large-scale PDE data that are expensive to store or re-generate. Halko et al.~\cite[Section 5.5]{HalMarTro2011} proposed a single-pass extension of R-SVD as follows: 
\begin{enumerate}[(1)]
\item Generate Gaussian random matrices $\bm{\Omega} \in \mathbb{R}^{n \times l}$ and $\tilde{\bm{\Omega}} \in \mathbb{R}^{m \times l}$, where $k < l \ll \min(n,m)$
\item Compute the products $\bm{A\Omega}$ and $\bm{A}^T\tilde{\bm{\Omega}}$ in a single-pass over $\bm{A}$
\item Using these two products, compute the two QR decompositions  $\bm{A\Omega} = \bm{Q}\bm{R}$ and $\bm{A}^T\tilde{\bm{\Omega}} = \tilde{\bm{Q}}\tilde{\bm{R}}$
\item Solve for the matrix $\bm{B}$, given by the minimum residual solution to the relations $\bm{Q}^{T}(\bm{A} {\bm{\Omega}}) = \bm{B}\tilde{\bm{Q}}^{T}\bm{\Omega}$ and $\tilde{\bm{Q}}^T(\bm{A}^T \tilde{\bm{\Omega}}) = \bm{B}^T \bm{Q}^T\tilde{\bm{\Omega}}$.
\item Compute the SVD of the small matrix $\bm{B}$, yielding $\bm{B} \approx \tilde{\bm{U}}\bm{S}\bm{V}^{T}$
\item Form the matrix $\bm{U}= \bm{Q}\tilde{\bm{U}}$ and set $\bm{U}_k = \bm{U}(:,1:k)$, $\bm{S}_k = \bm{S}(1:k,1:k)$, $\bm{V}_k = \bm{V}(:,1:k)$, to obtain the truncated SVD $\bm{A} \approx \bm{U}_k\bm{S}_k\bm{V}_k^{T}$~\citep{HalMarTro2011}
\end{enumerate}
The main drawback in this method lies in step (3) above, where the typically ill-conditioned matrix $\bm{Q}^{T}\tilde{\bm{\Omega}}$ may lead to considerable accumulation of error compared to the double-pass method presented in Algorithm \ref{alg:randsvd}~\cite{HalMarTro2011}. In order to reduce the communication in the parallel implementation of R-SVD in Algorithm \ref{alg:randsvd} and enable adaptive rank determination, a blocked formulation of the standard algorithm above is proposed in ~\cite{MarVor2016}. In this approach, the orthogonal matrix $\bm{Q} \in \mathbb{R}^{m \times l}$ is separated into $s$ blocks each of size $m \times b$ in the form
\begin{equation}
\bm{Q} = \left[\bm{Q}_1, \bm{Q}_2, \cdots, \bm{Q}_s \right],
\end{equation}
where $s\times b = l$. The computation of the matrix $\bm{Q}$ is then decoupled and carried out on the smaller blocks $\bm{Q}_i$, with all blocks orthogonalized via Gram-Schmidt and concatenated at the end of the process. In doing so, the rank $k$ can be determined so that the factorization admits a prescribed error $\epsilon$. For the interest of clarity and completeness, the main steps of this blocked formulation from \cite{MarVor2016} are now reported, which constitute an alteration to Steps 5-7 of the standard R-SVD in Algorithm \ref{alg:randsvd}. 

\begin{enumerate}[(1)]
\item For each block $i = 1,2,3,...,s$ \textbf{do}
\item Generate Gaussian matrix $\bm{\Omega}_i \in \mathbb{R}^{n \times b}$, where $s\times b = l$ 
\item Compute the QR factorization of $\bm{A\Omega}_i$ to obtain $\bm{Q}_i$
\item Re-orthonormalize $\bm{Q}_i - \sum_{j=1}^{i-1}\bm{Q}_j\bm{Q}_j^{T}\bm{Q}_i$
\item Compute $\bm{B}_i = \bm{Q}^{T}_i\bm{A}$
\item Set $\bm{A} = \bm{A} - \bm{Q}_i\bm{B}_i$
\item If $\Vert \bm{A} \Vert < \epsilon$ $\bm{stop}$
\item Construct 
$\bm{Q} = \left[\bm{Q}_1, \bm{Q}_2, \cdots, \bm{Q}_s \right]$; $\bm{B} = \left[\bm{B}_1^{T}, \bm{B}_2^{T}, \cdots, \bm{B}_s^{T} \right]^{T}$
\end{enumerate}
However, the above implementation of the blocking procedure increases the number of passes through $\bm{A}$ to $\mathcal{O}(s)$. 

In a recent work, Yu et al.~\citep{yu2017single} proposed a single-pass formulation of R-SVD that is shown empirically to result in more accurate low-rank factorizations, as compared to the single-pass R-SVD of \cite{HalMarTro2011}. In more detail, the approach of~\citep{yu2017single} generates an approximate truncated SVD $\bm{A} \approx \bm{U}_k\bm{S}_k\bm{V}_k^{T}$ of rank $k$ following the below steps:
\begin{enumerate}[(1)]
\item Generate Gaussian matrix $\bm{\Omega} \in \mathbb{R}^{n \times l}$, where $k < l \ll n$
\item Obtain the matrices $\bm{Y} = \bm{A\Omega}$ and $\bm{B} = \bm{A}^{T} \bm{Y}$ in a single-pass over $\bm{A}$ (Steps 11-15 in Algorithm~\ref{alg:spsvd})
\item Compute a QR decomposition of $\bm{Y} = \bm{QR}$, set $\bm{B} = \bm{B}\bm{R}^{-1}$ so $\bm{B} \approx \bm{A}^{T} \bm{Q}$ (Steps 16-24 in Algorithm~\ref{alg:spsvd}).
\item Compute the SVD of the small matrix $\bm{B}^{T}  \approx \tilde{\bm{U}}\bm{S}\bm{V}^{T}$ 
\item Construct $\bm{U} = \bm{Q}\tilde{\bm{U}}$ and set $\bm{U}_k = \bm{U}(:,1:k)$, $\bm{S}_k = \bm{S}(1:k,1:k)$, $\bm{V}_k = \bm{V}(:,1:k)$, to extract the truncated SVD $\bm{A} \approx \bm{U}_k\bm{S}_k\bm{V}_k^{T}$ 
\end{enumerate}
As discussed in~\citep{yu2017single}, Step (3) of this implementation can be performed in a blocked and, more importantly, single-pass mode resulting in Algorithm~\ref{alg:spsvd}, herein referred to as single-pass blocked randomized SVD (SBR-SVD) and employed in the numerical examples. 

\begin{algorithm}[t]
\caption{Single-pass Blocked Randomized SVD (SBR-SVD) $\bm{A} \approx \bm{U}_k\bm{S}_k\bm{V}_k^{T}$~\citep{yu2017single}}	\label{alg:spsvd}
\begin{algorithmic}[1]
\Procedure{SBR-SVD}{$\bm{A} \in \mathbb{R}^{m \times n}$}
\State $k \gets$ target rank
\State $p \gets$ over-sampling parameter
\State $l \gets k + p$
\State $b \gets$ block size
\State $s \gets$ number of blocks such that $s\times b = l$
\State instantiate $\bm{Q},\bm{B}$
\State  $\bm{\Omega} \gets randn(n,l)$
\State instantiate $\bm{G}$
\State $\bm{H} \gets zeros(n,l)$
\State $\bm{while}$ $\bm{A}$ is not entirely read through $\bm{do}$
\State $\qquad$ read the next row $\bm{a}$ of $\bm A$
\State $\qquad$ $\bm{g} \gets \bm{a}\bm{\Omega}$ $\quad$ $\bm{G} \gets [\bm{G}; \bm{g}]$
\State $\qquad$ $\bm{H} \gets \bm{H} + \bm{a}^{T}\bm{g}$
\State $\bm{end}$ $\bm{while}$
\State $\bm{for}$ $i = 1, 2,\dots, s$ $\bm{do}$

\State $\qquad$ $\bm{\Omega}_i$ $\gets$ $\bm{\Omega}(:, (i-1)b + 1 : ib)$
\State $\qquad$ $\bm{Y}_i$ $\gets$ $\bm{G}(:, (i-1)b + 1 : ib) - \bm{Q}(\bm{B}\bm{\Omega}_i)$
\State $\qquad$ $\bm{Q}_i,\bm{R}_i$ $\gets$ $qr$($\bm{Y}_i$) 
\State $\qquad$ $\bm{Q}_i,\tilde{\bm{R}}_i$ $\gets$ $qr$($\bm{Q}_i - \bm{Q}(\bm{Q}^{T}\bm{Q}_i))$ 
\State $\qquad$ $\bm{R}_i \gets \tilde{\bm{R}}_i\bm{R}_i$
\State $\qquad$ $\bm{B}_i \gets \bm{R}_i^{-T}(\bm{H}(:,(i-1)b + 1: ib)^{T} - \bm{Y}_i^{T}\bm{QB} - \bm{\Omega}_i^{T} \bm{B}^{T}\bm{B})$
\State $\qquad$ $\bm{Q} \gets \left[ \bm{Q},\bm{Q}_i \right]$ $\bm{B} \gets \left[ \bm{B}^{T},\bm{B}_i^{T} \right]^{T}$
\State $\bm{end}$ $\bm{for}$
\State  $\tilde{\bm{U}}, \bm{S}, \bm{V} \gets svd(\bm{B})$;
\State $\bm{U} \gets \bm{Q}\tilde{\bm{U}}$;
\State $\bm{U}_k \gets \bm{U}(:,1:k)$, $\bm{S}_k \gets \bm{S}(1:k,1:k)$, $\bm{V}_k \gets \bm{V}(:,1:k)$
\State $\bm{return}$ $\bm{U}_k, \bm{S}_k, \bm{V}_k$
\EndProcedure
\end{algorithmic}
\end{algorithm} 

For sufficiently large $p$ (hence $l$), e.g., $p$ between $5$ and $20$~\cite{HalMarTro2011}, the output of this algorithm is an approximately optimal $k$-rank approximation for a matrix per the Eckart-Young theorem~\cite{eckart1936approximation}. More specifically,
\begin{equation}
\mathbb{E}(\Vert \bm{A} - \bm{U}_k\bm{S}_k \bm{V}_k^{T}\Vert_F) \leq \left(1 + \frac{k}{p-1}\right)^{1/2}\left(\sum_{j=k+1}^{\min(m,n)} \sigma_{j}^2 \right)^{1/2}.
\end{equation}
In words, the error is on average only worse than the optimal solution by a factor of $\left[1 + k/(p-1)\right]^{1/2}$~\cite{yu2017single}. 

The SBR-SVD algorithm has computational complexity $\mathcal{O}(mnk)$, which can be reduced to $\mathcal{O}(mn\log k)$ when implemented with certain optimized matrix sketches including the sub-sampled random Fourier transform~\cite{HalMarTro2011,woolfe2008fast,rokhlin2008fast}. In addition, it has an approximate processing storage requirement of $l(m+2n)$ elements in RAM during execution~\citep{yu2017single}. Other single-pass implementations of randomized SVD are available in the literature, see, e.g., ~\cite{boutsidis2016optimal,clarkson2009numerical,tropp2017practical,upadhyay2016fast,woodruff2014sketching}, which are not considered in this work.

\subsection{Interpolative decomposition (ID) and its randomized variant}
\label{sec:ID}

The row ID, a two-pass algorithm, generates a decomposition of a matrix $\bm{A}$ following the form,~\cite{cheng2005compression},
\begin{equation}
\label{eqn:id}
\bm{A} \approx \bm{P}\bm{A}(\mathcal{I}, :) ,
\end{equation}
where the {\it row skeleton} $\bm{A}(\mathcal{I},:) \in \mathbb{R}^{k \times n}$ consists of a set of rows of $\bm{A}$ indexed by $\mathcal{I}\subseteq\{1,\dots,m\}$ with size $\vert\mathcal{I}\vert=k$. Further, $\bm{P} \in \mathbb{R}^{m \times k}$ is a coefficient matrix such that $\bm{P}(\mathcal{I} ,:) = \bm{I}$, with $\bm{I}$ the identity matrix. Row ID earns its name from the fact that it {\it interpolates} $\bm A$ in a basis consisting of a subset of its rows. The core procedure in the algorithm used to generate this decomposition are the column-pivoted (rank-revealing) QR algorithm~\cite{GuEis1996}, which yields the index vector $\mathcal{I}$ along with a least squares problem to compute the coefficient matrix $\bm{P}$. In more detail, first, the rank $k$ column-pivoted QR decomposition of $\bm{A}^T$ is computed. 
\begin{equation}
\bm{A}^T\bm{Z} \approx \bm{QR},
\end{equation}
where $\bm{Z}\in\mathbb{R}^{m\times m}$ is a permutation matrix encoding the pivoting done in the algorithm, $\bm{Q}\in\mathbb{R}^{n\times k}$ has orthonormal columns, and $\bm R\in\mathbb{R}^{k\times m}$ is upper triangular. Separating the matrix $\bm{R}$ by its columns into two sub-matrices $\bm{R} = \left[\bm{R}_{1} \hspace{2pt}\vert\hspace{2pt} \bm{R}_{2} \right]$, where $\bm{R}_{1}\in\mathbb{R}^{k\times k}$ and $\bm{R}_{2}\in\mathbb{R}^{k\times(m-k)}$, and approximating $\bm{R}_{2}\approx\bm{R}_{1}\bm{C}$ yields 
\begin{equation}
\bm{A}^T \approx \bm{Q}\bm{R}_{1} \left[\bm{I} \hspace{2pt}\vert\hspace{2pt} \bm{C}\right]\bm{Z}^{T} 
= \bm{A}^T(:,\mathcal{I})\left[\bm{I} \hspace{2pt}\vert\hspace{2pt} \bm{C} \right]\bm{Z}^{T}
= \bm{A}^T(:,\mathcal{I})\bm{P}^T,
\end{equation}
and hence (\ref{eqn:id}). As shown in~\cite{cheng2005compression}, the rank $k$ row ID of an $m \times n$ matrix features a spectral error bound of
\begin{equation}
\label{eqn:id_bound_orig}
\Vert \bm{A} - \bm{P}\bm{A}(\mathcal{I},:)\Vert_2 \leq \sqrt{1 + k(m-k)}\sigma_{k+1},
\end{equation} 
and the computational complexity of $\mathcal{O}(mnk)$.  Algorithm \ref{alg:id} summarizes the steps involved in ID, which are used in this study. To improve the stability of the QR factorization step, the modified Gram-Schmidt ($mgsqr$) procedure of~\cite{Golub} is employed in Step 3. 

\begin{remark}
\label{rem:IDs}
The ID algorithms presented in this study do not constitute a complete list of methods for generating decomposition of matrices which interpolate a subset of their rows. Other approaches for generating similar decompositions can be found in, e.g.,~\cite{mahoney2009cur,elhamifar2013sparse,dyer2015self}. It is also important to note that in the numerical results presented in Section \ref{sec:results}, the ID refers to a specific variation of the decomposition, the row ID. Analogous definitions of a column ID or double-sided ID are also available in the literature~\cite{Mar2016}.
\end{remark}

\begin{algorithm}[thb!]
\caption{Row ID $\bm{A} \approx \bm{P}\bm{A}(\mathcal{I},:)$~\citep{cheng2005compression}}	\label{alg:id}
\begin{algorithmic}[1]
\Procedure{ID}{$\bm{A}$ $\in \mathbb{R}^{m \times n}$}
\State $k \gets$ approximation rank 
\State $\bm{Q}$, $\bm{R}$, $\mathcal{I} \gets mgsqr(\bm{A}^T,k)$
\State $\bm{C} \gets (\bm{R}(1:k,1:k))^{+}\bm{R}(1:k,(k+1):m)$ $\qquad$ ($^{+}$ denotes pseudo-inverse)
\State $\bm{Z} \gets \bm{I}_m(:,[\mathcal{I}\ ,\ \mathcal{I}^c])$ \hspace{4.2cm} ($\mathcal{I}^c$ is the complement of $\mathcal{I}$ in $\{1,\dots,m\}$)
\State $\bm{P} \gets \bm{Z}\left[\bm{I}_k \hspace{2pt}\vert\hspace{2pt}\bm{C} \right]^{T}$
\State $\bm{return}$ $\mathcal{I}, \bm{P}$
\EndProcedure
\end{algorithmic}
\end{algorithm}

Notice that the row ID uses the entire columns of $\bm A$ -- corresponding to all $n$ degrees-of-freedom of the PDE data --  to find the row skeleton $\bm{A}(\mathcal{I},:)$ and the coefficient matrix $\bm{P}$. Therefore, the computational cost of the QR factorization step depends explicitly on $n$. To reduce the cost of ID, similar random projections of Section \ref{sec:r-svd} have been proposed in \cite{martinsson2011randomized} to generate a sketch of $\bm{A}$ using which an approximate ID of $\bm{A}$ is computed. Specifically, randomized ID (Algorithm \ref{alg:randid}) performs the ID of the sketch matrix $\bm Y=\bm{A}\bm{\Omega}$, with a random matrix $\bm{\Omega}\in\mathbb{R}^{n\times l}$, as
\begin{equation}
\bm{Y}\approx \tilde{\bm{P}}\bm{Y}(\tilde{\mathcal{I}},:).     
\end{equation}
It then uses the row indices $\tilde{\mathcal{I}}$ to set the row skeleton $\bm A(\tilde{\mathcal{I}},:)$ and applies the same coefficient matrix $\tilde{\bm{P}}$; that is,
\begin{equation}
\bm{A}\approx \tilde{\bm{P}}\bm{A}(\tilde{\mathcal{I}},:).     
\end{equation}
As shown in~\cite{martinsson2011randomized}, with high probability depending on $l$, the randomized ID leads to a low-rank factorization of $\bm A$ admitting bounds comparable to that of the standard ID in (\ref{eqn:id_bound_orig}) but with larger constants. 

\begin{algorithm}[t]
\caption{Randomized (Row) ID (Gaussian)~\cite{liberty2007randomized} $\bm{A} \approx \tilde{\bm{P}}\bm{A}(\tilde{\mathcal{I}},:)$}	\label{alg:randid}
\begin{algorithmic}[1]
\Procedure{RandID}{$\bm{A}$ $\in \mathbb{R}^{m \times n}$}
\State $k \gets$ approximation rank 
\State Generate $\bm{\Omega} \in \mathbb{R}^{n \times l}$ with i.i.d. Gaussian entries and $l=k+p$
\State $\bm{Y} \gets \bm{A}\bm{\Omega}$ \label{step:randprojectID}
\State $\tilde{\bm{Q}}$, $\tilde{\bm{R}}$, $\tilde{\mathcal{I}} \gets mgsqr(\bm{Y}^T,k)$
\State $\tilde{\bm{C}} \gets (\tilde{\bm{R}}(1:k,1:k))^{+}\tilde{\bm{R}}(1:k,(k+1):m)$ $\qquad\;\;$ ($^{+}$ denotes pseudo-inverse)
\State $\tilde{\bm{Z}} \gets \bm{I}_m(:,[\tilde{\mathcal{I}}\ ,\ \tilde{\mathcal{I}}^c])$ \hspace{4.4cm} ($\tilde{\mathcal{I}}^c$ is the complement of $\tilde{\mathcal{I}}$ in $\{1,\dots,m\}$)
\State $\tilde{\bm{P}} \gets \tilde{\bm{Z}}\left[\bm{I}_k \hspace{2pt}\vert\hspace{2pt}\tilde{\bm{C}} \right]^{T}$
\State $\bm{return}$ $\tilde{\mathcal{I}}, \tilde{\bm{P}}$
\EndProcedure
\end{algorithmic}
\end{algorithm}

\subsection{Sub-sampled interpolative decomposition}
\label{sec:subid}

Motivated by the randomized ID approach, a faster variant of ID that relies on grid (or data) sub-sampling to generate a deterministic sketch of the original data matrix $\bm{A}$ is proposed. Such a sketch is of particular interest, as it can be generated faster that the random projection -- at least when $\bm\Omega$ is dense -- by relying on the simple observation that for {\it smooth} solutions, a coarse representation of data is able to capture the low-rank subspace of the full solution. 

Let $\mathcal{J}\subseteq\{1,\dots,n\}$, with $k<\vert\mathcal{J}\vert=n_c \ll n$, denote the index of the degrees-of-freedom associated with a coarse subset of the original degrees-of-freedom, i.e. coarse grid or sub-sampled representation of data. The sub-sampled ID generates the rank $k$ ID of the sub-sampled data $\bm{A}_c$,
\begin{equation}
\label{eq:A_c}
\bm{A}_c = \bm{A}(:,\mathcal{J}),
\end{equation}
as 
\begin{equation}
\label{eq:coarseA_id}
\bm{A}_c \approx \hat{\bm{A}}_c =  \bm{P}_c\bm{A}_c(\mathcal{I}_c,:),
\end{equation}
which then induces an ID for the original matrix $\bm{A}$ as
\begin{equation}
\label{eq:coarseregA}
\bm{A} \approx \hat{\bm{A}} = \bm{P}_c \bm{A}(\mathcal{I}_c,:).
\end{equation}
In words, the indices and coefficient matrix obtained from the ID decomposition of the coarse grid data in (\ref{eq:coarseA_id}) are used to generate an interpolation rule for the full data matrix $\bm{A}$, a procedure called lifting in \cite{narayan2014stochastic}. This form of dimension reduction (i.e., directly sub-sampling the columns of the input matrix) is referred to as direct injection in the geometric multi-grid literature~\cite{hackbusch2013multi}. 

In addition to reducing the number of columns to be stored from $n$ to $n_c$, the complexity of computing the coarse-grid ID (\ref{eq:coarseA_id}) is $\mathcal{O}(mn_ck)$ instead of $\mathcal{O}(mnk)$ needed for $\bm{A}$ (see Table~\ref{tab:methods}). Moreover, the sketch time required to obtain $\bm{A}_c$ is $\mathcal{O}(1)$, a far better complexity than that of, e.g., the sub-sampled random Fourier transform ($\mathcal{O}(mn\log(k))$)~\cite{liberty2007randomized,woolfe2008fast}. The main steps of the sub-sampled ID are presented in Algorithm \ref{alg:subsampid}.

In the context of reduced order modeling and uncertainty quantification, the use of coarse grid data to {\it guide} the low-rank approximation of a fine grid quantity of interest has been successfully considered in several recent work; see, e.g., \cite{doostan2007stochastic, narayan2014stochastic,doostan2016bi,fairbanks2017low,hampton2018practical}. One such result is employed from~\cite[Theorem 1]{hampton2018practical} to bound the error of sub-sampled ID.

\begin{theorem}
Let $\bm{A}$ be the original data matrix, $\bm{A}_c$ the sub-sampled (coarsened) matrix as in (\ref{eq:A_c}), $\hat{\bm{A}}_c$ the rank $k$ ID approximation to $\bm{A}_c$ as in (\ref{eq:coarseA_id}), and $\hat{\bm{A}}$ the sub-sampled ID approximation as in (\ref{eq:coarseregA}). For any $\tau \geq 0$, let 
\begin{equation}
    \epsilon(\tau) := \lambda_{\max}(\bm{A}\bm{A}^T - \tau\bm{A}_c\bm{A}^T_c),
    \label{eq:epstau}
\end{equation}
where $\lambda_{\max}$ denotes the largest eigenvalue. Then,
\begin{align}
\label{eqn:id_bound}
\Vert \bm{A} - \hat{\bm{A}}\Vert_2 &\leq \min_{\tau, k \leq \mathrm{rank}(\bm{A}_c)} \rho_k(\tau), \\
\rho_k(\tau) &:= \left[(1 + \Vert\bm{P}_c \Vert_2)\sqrt{\tau\sigma_{k+1}^2 + \epsilon(\tau)} + \Vert \bm{A}_c - \hat{\bm{A}_c} \Vert_2 \sqrt{\tau +  \epsilon(\tau)\sigma_k^{-2}}\right],
\end{align}
where $\sigma_k$ and $\sigma_{k+1}$ are the $k^{th}$ and $(k+1)^{th}$ largest singular values of $\bm{A}_c$, respectively.
\label{thm:subIDerror}
\end{theorem}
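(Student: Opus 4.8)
The plan is to put the sub-sampled ID into the bi-fidelity framework of \cite[Theorem 1]{hampton2018practical} and run the estimate there. First I would express the error as the action of a single oblique projector. Let $\bm{S}\in\mathbb{R}^{k\times m}$ be the row-selection matrix, so $\bm{A}(\mathcal{I}_c,:)=\bm{S}\bm{A}$ and $\bm{A}_c(\mathcal{I}_c,:)=\bm{S}\bm{A}_c$, and set $\bm{M}:=\bm{I}_m-\bm{P}_c\bm{S}$. Then $\hat{\bm{A}}=\bm{P}_c\bm{S}\bm{A}$ and $\hat{\bm{A}}_c=\bm{P}_c\bm{S}\bm{A}_c$, so $\bm{A}-\hat{\bm{A}}=\bm{M}\bm{A}$ and $\bm{A}_c-\hat{\bm{A}}_c=\bm{M}\bm{A}_c$; and since $\Vert\bm{S}\Vert_2=1$, also $\Vert\bm{M}\Vert_2\le 1+\Vert\bm{P}_c\Vert_2$. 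The second ingredient is the semidefinite reading of $\epsilon(\tau)$ in (\ref{eq:epstau}): by definition, $\bm{A}\bm{A}^T\preceq\tau\,\bm{A}_c\bm{A}_c^T+\epsilon(\tau)\bm{I}_m$, so that, for every matrix $\bm{G}$ with $m$ columns and every $\tau\ge0$ with $\epsilon(\tau)\ge0$,
\begin{equation*}
\Vert\bm{G}\bm{A}\Vert_2^2\ \le\ \tau\,\Vert\bm{G}\bm{A}_c\Vert_2^2+\epsilon(\tau)\,\Vert\bm{G}\Vert_2^2,
\end{equation*}
obtained by conjugating the operator inequality with $\bm{G}$ and passing to the largest eigenvalue. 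This is the device that exchanges the fine matrix $\bm{A}$ for the cheap coarse matrix $\bm{A}_c$, at the price of the spectral gap $\epsilon(\tau)$.

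Next I would split $\bm{A}$ along the leading left-singular subspace of $\bm{A}_c$. Writing $\bm{A}_c^{(k)}$ for the best rank-$k$ approximation of $\bm{A}_c$, $\bm{V}_k$ for its leading right singular vectors, and $\bm{\Pi}_k:=\bm{A}_c^{(k)}(\bm{A}_c^{(k)})^{+}$ for the orthogonal projector onto $\mathrm{range}(\bm{A}_c^{(k)})$ (the assumption $k\le\mathrm{rank}(\bm{A}_c)$ is precisely what makes $\sigma_k>0$, hence $\Vert(\bm{A}_c^{(k)})^{+}\Vert_2=\sigma_k^{-1}$ finite), I would decompose $\bm{A}-\hat{\bm{A}}=\bm{M}\bm{A}=\bm{M}(\bm{I}_m-\bm{\Pi}_k)\bm{A}+\bm{M}\bm{\Pi}_k\bm{A}$; the two summands of $\rho_k(\tau)$ come from these two terms.

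For the first term, $\Vert\bm{M}(\bm{I}_m-\bm{\Pi}_k)\bm{A}\Vert_2\le(1+\Vert\bm{P}_c\Vert_2)\,\Vert(\bm{I}_m-\bm{\Pi}_k)\bm{A}\Vert_2$, and applying the displayed inequality with $\bm{G}=\bm{I}_m-\bm{\Pi}_k$ and using $(\bm{I}_m-\bm{\Pi}_k)\bm{A}_c=\bm{A}_c-\bm{A}_c^{(k)}$ (the dominant component is annihilated) gives $\Vert(\bm{I}_m-\bm{\Pi}_k)\bm{A}\Vert_2^2\le\tau\sigma_{k+1}^2+\epsilon(\tau)$, i.e.\ the contribution $(1+\Vert\bm{P}_c\Vert_2)\sqrt{\tau\sigma_{k+1}^2+\epsilon(\tau)}$. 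For the second term I would factor $\bm{M}\bm{\Pi}_k\bm{A}=(\bm{M}\bm{A}_c^{(k)})\big((\bm{A}_c^{(k)})^{+}\bm{A}\big)$ and bound each factor: since $\bm{A}_c^{(k)}=\bm{A}_c\bm{V}_k\bm{V}_k^T$, one has $\Vert\bm{M}\bm{A}_c^{(k)}\Vert_2\le\Vert\bm{M}\bm{A}_c\Vert_2=\Vert\bm{A}_c-\hat{\bm{A}}_c\Vert_2$; and applying the displayed inequality with $\bm{G}=(\bm{A}_c^{(k)})^{+}$, using $(\bm{A}_c^{(k)})^{+}\bm{A}_c=\bm{V}_k\bm{V}_k^T$ and $\Vert(\bm{A}_c^{(k)})^{+}\Vert_2=\sigma_k^{-1}$, gives $\Vert(\bm{A}_c^{(k)})^{+}\bm{A}\Vert_2^2\le\tau+\epsilon(\tau)\sigma_k^{-2}$. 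The product of these is at most $\Vert\bm{A}_c-\hat{\bm{A}}_c\Vert_2\sqrt{\tau+\epsilon(\tau)\sigma_k^{-2}}$. Summing the two contributions yields $\Vert\bm{A}-\hat{\bm{A}}\Vert_2\le\rho_k(\tau)$, and since $\tau\ge0$ and the admissible $k$ were arbitrary, the infimum gives (\ref{eqn:id_bound}).

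The routine part is the SVD bookkeeping — that $\bm{I}_m-\bm{\Pi}_k$ kills the top-$k$ component of $\bm{A}_c$ while $(\bm{A}_c^{(k)})^{+}$ retains only that component and rescales it by $\sigma_k^{-1}$ — together with the submultiplicativity estimates. The point I expect to need real care is the standing assumption $\epsilon(\tau)\ge0$: it is what lets the conjugation step go through (otherwise $\epsilon(\tau)$ times a singular projector is not dominated by $\epsilon(\tau)\bm{I}$), it holds automatically whenever $\mathrm{range}(\bm{A})\not\subseteq\mathrm{range}(\bm{A}_c)$, and otherwise one just restricts the infimum over $\tau$ to the range in which $\rho_k(\tau)$ is real. (A slightly sharper but less transparent bound also follows by applying the displayed inequality directly with $\bm{G}=\bm{M}$, namely $\Vert\bm{A}-\hat{\bm{A}}\Vert_2^2\le\tau\Vert\bm{A}_c-\hat{\bm{A}}_c\Vert_2^2+\epsilon(\tau)(1+\Vert\bm{P}_c\Vert_2)^2$; I target the stated form because it is the one quoted from the cited theorem.)
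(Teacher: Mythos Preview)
Your proposal is correct and is in fact more than the paper itself supplies: the paper does not prove Theorem~\ref{thm:subIDerror} at all but simply refers the reader to \cite[Theorem~1]{hampton2018practical} for the details, and your argument is a faithful and complete reconstruction of that bi-fidelity estimate specialized to the sub-sampled ID setting. The oblique-projector formulation $\bm{A}-\hat{\bm{A}}=(\bm{I}_m-\bm{P}_c\bm{S})\bm{A}$, the semidefinite reading of $\epsilon(\tau)$, the split along $\mathrm{range}(\bm{A}_c^{(k)})$, and the two applications of the conjugation inequality are exactly the mechanism behind the cited theorem, and your handling of the $\epsilon(\tau)\ge 0$ caveat is appropriate.
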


The interested reader is referred to~\cite[Theorem 1]{hampton2018practical} for the details of the proof of this theorem. Instead, some remarks regarding the results are provided next. Firstly, following \cite{martinsson2011randomized}, $\Vert\bm{P}_c \Vert_2\le \sqrt{k(m-k)+1}$. The effectiveness of the sub-sampled ID approximation $\hat{\bm{A}}$ depends on the assumptions that $\bm{A}_c$ is low-rank and the optimal $\epsilon(\tau)$ is small. The former assumption follows from the assumption that $\bm A$ is low-rank. To investigate $\epsilon(\tau)$, consider the case in which the physical domain of the PDE is a subset of $\mathbb{R}^3$ and the original data is generated via a uniform grid of size $h$ in each direction. It is also assumed the sub-sampled data corresponds to a coarse subset of this grid with uniform size $H\gg h$ in each direction. It can be shown that, for $\tau^*=n/n_c$,
\begin{equation}
\label{eqn:eps_sub-sampled}
\epsilon(\tau^*)\lesssim H,
\end{equation}
where $\lesssim$ denotes a smaller inequality with a bounded constant.  To see this, we observe that the entry $(i,j)$ of the Gramians $\bm{A}\bm{A}^T/n$ and $ \bm{A}_c\bm{A}^T_c/n_c$ are the approximations of the  Euclidean inner-product of the solution at times $i$ and $j$ via a rectangular (piece-wise constant) rule of size $h$ and $H$, respectively. Therefore, with $\tau^*=n/n_c$, $\Vert\bm{A}\bm{A}^T - \tau^*\bm{A}_c\bm{A}^T_c\Vert_{\max}\lesssim H$, assuming the data snapshots have bounded first derivatives. Then, (\ref{eqn:eps_sub-sampled}) follows given that $\epsilon(\tau)= \Vert\bm{A}\bm{A}^T - \tau\bm{A}_c\bm{A}^T_c\Vert_2\le m\Vert\bm{A}\bm{A}^T - \tau\bm{A}_c\bm{A}^T_c\Vert_{\max}$. When the singular values of $\bm{A}_c$ decay rapidly, together with (\ref{eqn:id_bound}), this dependence of $\epsilon(\tau^*)$ on $H$ suggests the error estimate $\Vert \bm{A} - \hat{\bm{A}}\Vert_2\lesssim H^{1/2}$ for the sub-sampled ID approximation. 

\begin{algorithm}[t]
\caption{Sub-sampled ID $\bm{A} \approx \bm{P}_c \bm{A}(\mathcal{I}_c,:)$}	\label{alg:subsampid}
\begin{algorithmic}[1]
\Procedure{SubID}{$\bm{A}$ $\in \mathbb{R}^{m \times n}$}
\State $k \gets$ approximation rank 
\State $\bm{A}^T_c \gets subsample(\bm{A}^T)$
\State $\bm{Q}_c$, $\bm{R}_c$, $\mathcal{I}_c \gets mgsqr(\bm{A}^T_c,k)$
\State $\bm{C}_c \gets (\bm{R}_c(1:k,1:k))^{+}\bm{R}_c(1:k,(k+1):m)$ $\qquad$ ($^{+}$ denotes pseudo-inverse)
\State $\bm{Z}_c \gets \bm{I}_m(:,[\mathcal{I}_c\ ,\ \mathcal{I}_c^c])$ \hspace{4.4cm} ($\mathcal{I}_c^c$ is the complement of $\mathcal{I}_c$ in $\{1,\dots,m\}$)
\State $\bm{P}_c \gets \bm{Z}_c\left[\bm{I}_k \hspace{2pt}\vert\hspace{2pt}\bm{C}_c \right]^{T}$
\State $\bm{return}$ $\mathcal{I}_c, \bm{P}_c$
\EndProcedure
\end{algorithmic}
\end{algorithm}

\subsection{Single-pass interpolative decomposition}
\label{sec:SPID}

In this section, a simple, single-pass algorithm for generating ID, dubbed single-pass ID (Algorithm~\ref{alg:spid}), is presented. To the best of our knowledge, this is the first single-pass ID algorithm. The standard, randomized, or sub-sampled ID algorithms described in Sections \ref{sec:ID}-\ref{sec:subid} require a second pass through the data to establish the row skeleton matrix $\bm{A}(\mathcal{I}_c,:)$. Instead, in single-pass ID, the row skeleton of the coarse data matrix, i.e., $\bm{A}_c(\mathcal{I}_c,:)$, are interpolated back to the original grid in order to form an approximation to $\bm{A}(\mathcal{I}_c,:)$. Specifically,
\begin{equation}
\label{eqn:single_ID}
\bm{A} \approx \hat{\bm A} = \bm{P}_c\bm{A}_c(\mathcal{I}_c,:) \bm{M},
\end{equation}
where $\bm{P}_c$ and $\bm{A}_c(\mathcal{I}_c,:)$ are as in (\ref{eq:coarseA_id}) and $\bm{M}\in\mathbb{R}^{n_c\times n}$ is a coarse to fine interpolation operator. Such a construction is similar to the prolongation step of multi-grid solvers, and is naturally motivated by the observation that the rows of $\bm{A}_c(\mathcal{I}_c,:)$ are coarse grid representation of those of $\bm{A}(\mathcal{I}_c,:)$. The trade-off for one fewer pass reduces loading and simulation time, but sacrifices accuracy as, depending on the sub-sampling factor $n/n_c$, this method may incur large error when interpolating back onto the fine grid. Therefore, the use of the single-pass ID is justified when rerunning the PDE solver or a second pass through the data to set $\bm{A}(\mathcal{I}_c,:)$ is not desirable. The interpolation step is more computationally expensive than lifting in terms of FLOPs; however, it can be done when the data matrix $\bm A$ is to be reconstructed as opposed to during the PDE simulation or data compression steps.

\begin{algorithm}[htb]
\caption{Single-pass ID $\bm{A} \approx \bm{P}_c\bm{A}_c(\mathcal{I}_c,:) \bm{M}$}	\label{alg:spid}
\begin{algorithmic}[1]
\Procedure{SPID}{$\bm{A}$ $\in \mathbb{R}^{m \times n}, Mesh$}
\State $k \gets$ approximation rank
\State $\bm{A}^T_c \gets subsample(\bm{A}^T)$
\State Form the interpolation matrix $\bm M$ \label{step:buildMmatrix}
\State $\bm{Q}_c$, $\bm{R}_c$, $\mathcal{I}_c \gets mgsqr(\bm{A}^T_c,k)$ 
\State $\bm{C}_c \gets (\bm{R}_c(1:k,1:k))^{+}\bm{R}_c(1:k,(k+1):m)$ $\qquad$ ($^{+}$ denotes pseudo-inverse)
\State $\bm{Z}_c \gets \bm{I}_m(:,[\mathcal{I}_c\ ,\ \mathcal{I}_c^c])$ \hspace{4.4cm} ($\mathcal{I}_c^c$ is the complement of $\mathcal{I}_c$ in $\{1,\dots,m\}$)
\State $\bm{P}_c \gets \bm{Z}_c\left[\bm{I}_k \hspace{2pt}\vert\hspace{2pt}\bm{C}_c \right]^{T}$
\State $\bm{return}$ $\bm{A}_c(\mathcal{I}_c,:)$, $\bm{P}_c$, $\bm{M}$  
\EndProcedure
\end{algorithmic}
\end{algorithm} 

The computational complexity of single-pass ID is the same as that of the sub-sampled ID, i.e., $\mathcal{O}(mn_ck)$. Let $t$ denote the number of coarse grid data points used to construct each of the $n$ fine grid data points. The cost of building and storing the interpolation operator $\bm{M}$, a sparse matrix, is $\mathcal{O}(tn)$, which is negligible relative to any asymptotic dependence on $mn_ck$. In lieu of any spatial compression, the total disk memory required to store the output of the algorithm is far less than the other ID (or SVD) methods described in the prior sections. This is because the coarse skeleton matrix $\bm{A}_c(\mathcal{I}_c,:)$ is stored instead of its fine counterpart. This variant of ID therefore yields a spatio-temporally compression of a data set, as opposed to only temporal compression achieved by the other ID algorithms. Additionally, the total RAM usage of the algorithm is $k(m+n_c) + mn_c + tn$, where $tn$ corresponds to the interpolation matrix $\bm M$. Notice that each step of the single-pass ID is the same as those of sub-sampled ID excluding the interpolation step. The interpolation operator can be represented as a sparse matrix $\bm{M}$ (constructed in Step~\ref{step:buildMmatrix} of Algorithm~\ref{alg:spid}), which can then be stored to disk and applied to the coarse row skeleton $\bm{A}_c(\mathcal{I}_c,:)$ following execution of the algorithm.

A bound on the error of an approximation generated using single-pass ID is presented in the following theorem.
\begin{theorem}
Let $\bm{A}$ be a data matrix, $\bm{A}_c$ the sub-sampled (coarsened) matrix as in (\ref{eq:A_c}), $\bm{M}$ the interpolation operator as in (\ref{eqn:single_ID}) and with associated interpolation error $\bm{E}_I:=\bm{A} - \bm{A}_c\bm{M}$, and $\sigma_{k+1}$ the $(k+1)^{th}$ largest singular value of $\bm{A}_c$. The error of the single-pass ID approximation $\hat{\bm{A}}$ in (\ref{eqn:single_ID}) is bounded as follows
\begin{align}
\Vert \bm{A} - \hat{\bm{A}}\Vert_2 &\leq \Vert \bm{E}_I \Vert_2 + \Vert \bm{M} \Vert_2 \sqrt{1+ k(m-k)}\sigma_{k+1}.
\end{align}
\label{thm:SPIDerror}
\end{theorem}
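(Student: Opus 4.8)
The plan is to prove Theorem~\ref{thm:SPIDerror} by a single triangle-inequality split around the intermediate matrix $\bm{A}_c\bm{M}$, followed by an appeal to the deterministic row-ID bound (\ref{eqn:id_bound_orig}) applied to the coarsened matrix $\bm{A}_c$ rather than to $\bm{A}$ itself. Concretely, I would write
\begin{align}
\Vert \bm{A} - \hat{\bm{A}}\Vert_2
= \Vert \bm{A} - \bm{P}_c\bm{A}_c(\mathcal{I}_c,:)\bm{M}\Vert_2
= \Vert (\bm{A} - \bm{A}_c\bm{M}) + (\bm{A}_c - \bm{P}_c\bm{A}_c(\mathcal{I}_c,:))\bm{M}\Vert_2 ,
\end{align}
and then apply the triangle inequality to bound this by $\Vert \bm{A} - \bm{A}_c\bm{M}\Vert_2 + \Vert (\bm{A}_c - \bm{P}_c\bm{A}_c(\mathcal{I}_c,:))\bm{M}\Vert_2$.

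The first term is exactly $\Vert \bm{E}_I \Vert_2$ by the definition $\bm{E}_I := \bm{A} - \bm{A}_c\bm{M}$. For the second term I would use submultiplicativity of the spectral norm to peel off $\bm{M}$, obtaining $\Vert \bm{A}_c - \bm{P}_c\bm{A}_c(\mathcal{I}_c,:)\Vert_2 \, \Vert \bm{M} \Vert_2$. The factor $\bm{A}_c - \bm{P}_c\bm{A}_c(\mathcal{I}_c,:) = \bm{A}_c - \hat{\bm{A}}_c$ is precisely the residual of the rank-$k$ row ID of $\bm{A}_c$ produced in (\ref{eq:coarseA_id}); since $\bm{A}_c \in \mathbb{R}^{m\times n_c}$ has the same number of rows $m$ as $\bm{A}$, the bound (\ref{eqn:id_bound_orig}) from \cite{cheng2005compression} gives $\Vert \bm{A}_c - \hat{\bm{A}}_c\Vert_2 \leq \sqrt{1 + k(m-k)}\,\sigma_{k+1}$, with $\sigma_{k+1}$ the $(k+1)^{th}$ largest singular value of $\bm{A}_c$. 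Substituting both estimates back yields the claimed inequality.

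There is no substantive obstacle here; the proof is essentially a bookkeeping exercise. The only points requiring care — and worth stating explicitly in the write-up — are (i) that the row-ID constant $\sqrt{1+k(m-k)}$ is governed by the row count $m$, which is unchanged under column sub-sampling, so one may legitimately reuse (\ref{eqn:id_bound_orig}) for $\bm{A}_c$; and (ii) that the singular value appearing in the final bound is that of $\bm{A}_c$, not of $\bm{A}$, consistent with the hypothesis. No tightness or probabilistic argument is needed, so the statement follows in a few lines.
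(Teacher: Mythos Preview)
Your proposal is correct and follows essentially the same argument as the paper's proof: the same triangle-inequality split around $\bm{A}_c\bm{M}$, the same submultiplicativity step to extract $\Vert\bm{M}\Vert_2$, and the same appeal to (\ref{eqn:id_bound_orig}) applied to $\bm{A}_c$. Your explicit remarks (i) and (ii) are accurate and match the paper's implicit use of those facts.
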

\begin{proof} Note that,
\begin{equation}
\begin{split}
\label{eq:sperror}
\Vert \bm{A} - \hat{\bm{A}} \Vert_2 &= \Vert \bm{A} - \bm{P}_c\bm{A}_c(\mathcal{I}_c,:)\bm{M} \Vert_2 \\
 &\leq \Vert \bm{A} - \bm{A}_c\bm{M}\Vert_2 + \Vert \bm{A}_c\bm{M} -\bm{P}_c\bm{A}_c(\mathcal{I}_c,:)\bm{M} \Vert_2\\
 &\leq \Vert \bm{E}_I \Vert_2 + \Vert \bm{A}_c - \bm{P}_c\bm{A}_c(\mathcal{I}_c,:) \Vert_2 \Vert \bm{M}\Vert_2  \\
 &\leq \Vert \bm{E}_I \Vert_2 + \Vert \bm{M} \Vert_2 \sqrt{1+ k(m-k)}\sigma_{k+1},
\end{split}
\end{equation}
where the last inequality follows from (\ref{eqn:id_bound_orig}) applied to the ID of $\bm{A}_c$.
\end{proof}

Stated differently, Theorem \ref{thm:SPIDerror} suggests the error of single-pass ID depends on the low-rank structure of $\bm{A}_c$ as well as the error incurred in the interpolation step. When the data is considerably low-rank, the single-pass ID error is dominated by the interpolation error $\Vert\bm E_I\Vert$. Considering piece-wise linear interpolation, when data has bounded second derivative, $\Vert\bm E_I\Vert\lesssim H^2$, where $H$ is the size of the course grid~\cite{brenner2007mathematical}. It is worthwhile to note that the interpolation error is independent of the target rank, instead being a function of $H$, or the sub-sampling parameter $n/n_c$, and the smoothness of the data. 

The sub-sampled ID, on the other hand, is less sensitive to sub-sampling, with an error growing asymptotically like $H^{1/2}$. As the sub-sampling factor is increased, a significantly smaller increase in error when using sub-sampled ID than when using single-pass ID is to be expected. Moreover, the error of sub-sampled ID can be bounded above as a function of the singular values of the coarse matrix $\bm{A}_c$ (see Theorem~\ref{thm:subIDerror}), while the error of single-pass ID saturates once the interpolation error dominates the low-rank approximation error. A consequence of this is that the accuracy of single-pass ID will cease to improve despite increasing the target rank. Consequently, sub-sampled ID outperforms single-pass ID in many cases, as examined in Section~\ref{sec:results}. Despite the general superiority of sub-sampled ID in constructing row-rank factorizations, single-pass ID may achieve greater accuracy in certain cases. In particular, when an insufficiently small approximation rank is used, e.g., $k\sim\mathcal{O}(1)$, the absolute interpolation error from generating $\bm{A}_c(\mathcal{I},:)\bm{M}$ can be negligible compared to the error incurred by the low-rank approximation in and of itself. Moreover, when the coarse data is generated via minimal sub-sampling, e.g., $n/n_c \sim \mathcal{O}(1)$, the interpolation error is small enough to compete with lifting in the reconstruction of $\bm{A}$ in the sub-sampled ID. This is corroborated by the result reported in the left panel of Figure~\ref{fig:1PID} in Section~\ref{sec:flow_compression_efficiency}. 

\begin{remark}
\label{rem:single_id_vs_sbrsvd}
An important difference between single-pass ID and SBR-SVD is how their respective stopping criteria are defined. SBR-SVD requires knowledge of the target rank, i.e., compression factor (CF), whereas ID adaptively evaluates a stopping criterion based on the decay of the singular values of the input matrix via a rank-revealing QR algorithm. SBR-SVD takes as input a matrix, a target rank corresponding to compression dimension, and a block size (see Algorithm~\ref{alg:id}). Consequently, in SBR-SVD the desired compression dimension of a matrix must be known \textit{a priori}. Without extensive knowledge of the data being compressed, the accuracy of a decomposition generated via SBR-SVD cannot be known without revisiting the entire matrix following compression. If the numerical rank is not known in advance, the algorithm must be augmented to make more passes over the input matrix. 
\end{remark}

\subsection{Computational complexity and storage comparison}
\label{sec:cost_comparison}

Table \ref{tab:methods} provides the computational complexity, disk storage, and RAM usage required by the SBR-SVD, full ID, sub-sampled ID, and the single-pass ID algorithm with input matrix of size $m \times n$ and target rank $k$. Recall that $l=k+p$ is slightly larger than the target rank $k$ by the oversampling parameter $p$, e.g., $p$ between $10$ and $20$. In addition, the size of the sub-sampled matrix is considerably smaller than that of the full data matrix, i.e., $n_c\ll n$. $t$ is defined to be the size of the stencil of the interpolation scheme used in to map the coarse grid data onto the fine grid. RAM usage and disk storage are given as the total number of matrix entries stored during and following the execution of the algorithm, respectively. 

\begin{table}[thb]
\centering
\tabcolsep7pt\begin{tabular}{lccc}
\hline
Method & Computational complexity & Disk storage & RAM usage\\
\hline
SBR-SVD & $\mathcal{O}(mnk)$  & $k(m+n)$  & $l(m + 2n)$\\
Full ID & $\mathcal{O}(mnk)$  & $k(m+n)$ & $k(m+n) + mn$ \\
Sub-sampled ID & $\mathcal{O}(mn_ck)$ & $k(m+n)$ & $k(m + n_c) + mn_c$\\
Single-pass ID & $\mathcal{O}(mn_ck)$ & $k(m + n_c) + tn$ & $k(m+n_c) + mn_c + tn$\\
\hline
\end{tabular}
\caption{Computational complexity of the SBR-SVD, full ID, sub-sampled ID and single-pass ID with input matrix of size $m \times n$ and target rank $k$. Variable $l \approx k$ in column 3 is the sub-sampled dimension of the matrix in SBR-SVD. Variable $n_c \ll n$ represents the column dimension of the input matrix after sub-sampling. Here, $t$ is the size of the stencil for the interpolation scheme used in mapping the coarse grid onto the fine grid, e.g., $t=2$ for piecewise linear interpolation in one dimension. Disk storage is given as the total number of matrix entries stored following execution of the algorithm.}	\label{tab:methods}
\end{table}

\section{Numerical experiments}	
\label{sec:results}

The compression efficiency and reconstruction accuracy of SBR-SVD, ID, sub-sampled ID, and single-pass ID methods are investigated in compressing data extracted from the DNS of wall-bounded turbulent flows using the Soleil-MPI low-Mach-number flow solver~\citep{Esmaily2018-A}.
In particular, the canonical periodic channel flow at friction Reynolds number $Re_{\tau} = 180$ is selected for two test cases: single-phase turbulence (Figure~\ref{fig:data_capture}) and particle-laden flow with Stokes numbers $St^{+} = 0, 1, 10$ (Figure~\ref{fig:particle_laden_turbulent_flow}). 
As is customary, $Re_{\tau} = u_{\tau} \delta / \nu$, where $u_{\tau}$ is the friction velocity, $\delta$ is the channel half-height, and $\nu$ is the kinematic viscosity of the fluid; $\nu = \mu / \rho$ with $\mu$ the dynamic viscosity and $\rho$ the density.
The mass flow rate is determined through a mean stream-wise pressure gradient $\langle d p /d x \rangle = -\tau_{w}/\delta$, where $p$ is the pressure and $\tau_{w} = \rho \nu \left( d{\langle u \rangle}/{d y} \right)_{y=0} = \rho u_{\tau}^{2}$ is the wall shear stress, with $\langle u \rangle$ the mean stream-wise velocity.

\begin{figure}[htb]
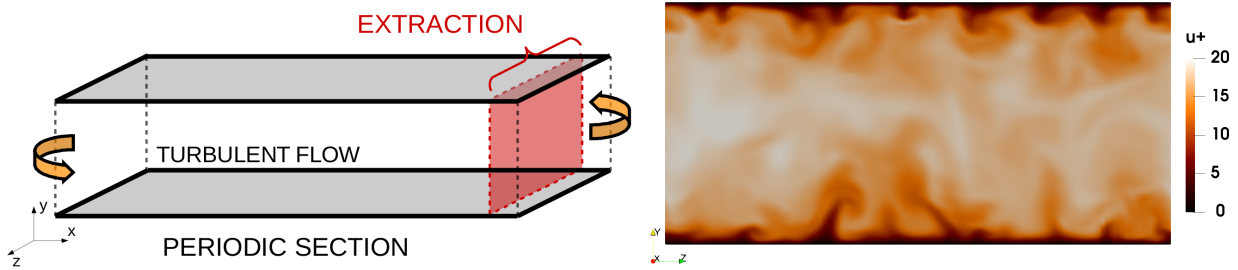

  \centering
  \includegraphics[width=0.51\textwidth]{data_extraction.pdf}
  \includegraphics[width=0.47\textwidth]{flow_yz.pdf}
  \caption{Left: Data extraction computational setup for compression and reconstruction. Right: Stream-wise velocity (wall units) on the extraction plane (instantaneous, span-wise $y$-$z$ plane snapshot).}	\label{fig:data_capture}
\end{figure}

In wall-bounded particle-laden turbulent flows, preferential concentration of the disperse phase ---\thinspace inertial particles are expelled from intense vortical structures and concentrate in regions of the flow dominated by strain\thinspace--- is characterized by the viscous Stokes number $St^{+} = \tau_{p}/\tau_{f}$, defined as the ratio between particle relaxation, $\tau_{p} = \rho_{p}d_{p}^{2}/(18\rho\nu)$ with $\rho_{p}$ the particle density and $d_{p}$ its diameter, and flow, $\tau_{f} = \nu/u_{\tau}^{2}$, time scales.
For the three $St^{+}$ numbers considered, the flow is laden with $200000$ particles resulting in a dilute mixture, i.e., one-way coupling with no particle-particle collisions.
The particle sizes are several orders of magnitude smaller than the smallest (i.e., Kolmogorov) flow scales, and the density ratio between particles and fluid is $\rho_{p}/\rho \gg 1$.
As a result, particles are modeled following a Lagrangian point-particle (PP) approach with Stokes' drag as the most important force~\cite{Maxey1983-A}. A detailed description of the physics modeling and mathematical formulation can be found in~\cite{Jofre2017-A,Fairbanks2020-A,Jofre2020-A}.
The computational domain is $4 \pi \delta \times 2\delta \times 4/3\pi \delta$ in the stream-wise ($x$), vertical ($y$), and span-wise ($z$) directions, respectively.
The stream-wise and span-wise boundaries are set periodic, and no-slip conditions are imposed on the horizontal boundaries ($x$-$z$ planes).
The grid is uniform in the stream-wise and span-wise directions with spacings in wall units equal to $\Delta x^{+} = 9$ and $\Delta z^{+} = 6$, and stretched toward the walls in the vertical direction with the first grid point at $y^{+} = y u_{\tau}/\nu =0.1$ and with resolutions in the range $0.1 < \Delta y^{+} < 8$.
This grid arrangement corresponds to a DNS of size $256 \times 128 \times 128$ grid-points.

The problem under study is illustrated in Figure~\ref{fig:particle_laden_turbulent_flow}.
The simulation strategy starts from a sinusoidal velocity field, seeded with randomly distributed particles, which is advanced in time to reach turbulent steady-state conditions after several FTTs. Based on the bulk velocity, $u_{b} = 1/\delta \int_{0}^{\delta} \langle u \rangle \thinspace d y$, and the length of the channel, $L = 4\pi\delta$, the FTT is defined as $t_{b} = L/u_{b}$.
Once a sufficiently long transient period is surpassed ---\thinspace approximately $10$ eddy-turnover times, $t_{l} \sim \delta / u_{\tau}$\thinspace---the temporal evolution of the velocity field at the outlet plane (130 $\times$ 130 grid: $128$ inner + $2$ boundary points) is extracted for an entire FTT resulting in $25100$ time snapshots.
Similarly, disperse-phase data, viz. time-dependent positions and velocities, for all $200000$ particles are collected over $10000$ time steps.
These two sets of data are utilized to investigate the performance of the compression methods. 
\begin{figure}[t]
  \centering
  \includegraphics[width=0.9\textwidth]{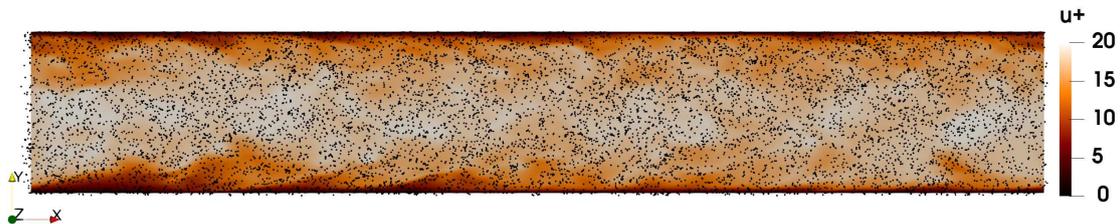}
  \caption{Particle-laden turbulent flow in a periodic channel (instantaneous, stream-wise $x$-$y$ plane snapshot). The quantity represented is stream-wise velocity (wall units) of the fluid phase with particles colored in black. The friction Reynolds number is $Re_{\tau} = 180$ and the viscous Stokes number is $St^{+} = 1$.}	\label{fig:particle_laden_turbulent_flow}
\end{figure}

\begin{figure}[t]
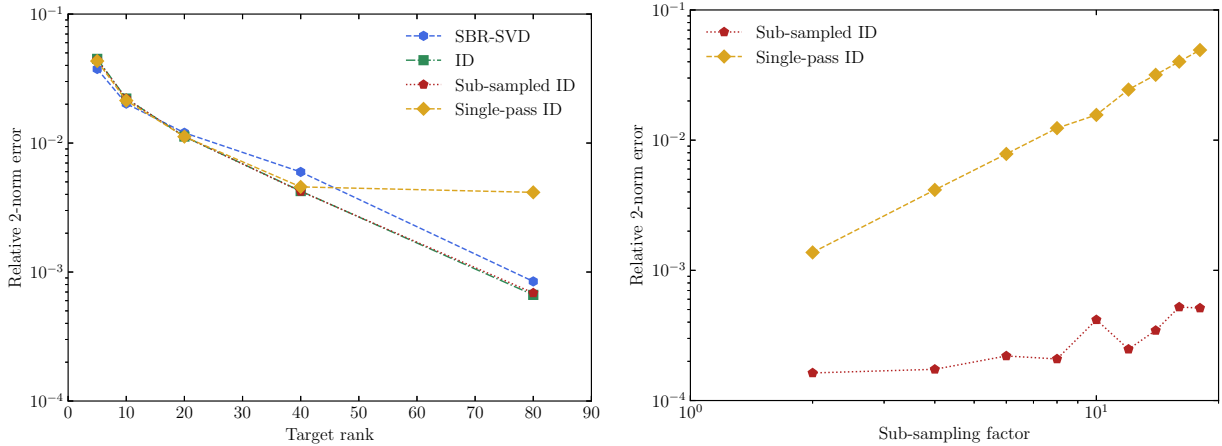

  \centering
  \includegraphics[width=0.49\textwidth]{relative_2_norm_target_rank.pdf}
  \includegraphics[width=0.49\textwidth]{single_pass_sub_sampled_ID_errors.pdf}
  \caption{Left: Relative 2-norm error of the four methods for different target ranks for stream-wise velocity data from a turbulent channel flow at $Re_{\tau}=180$. Right: Relative $2$-norm error of single-pass ID and sub-sampled ID with respect to subsampling factor for fixed rank $k=100$.}	\label{fig:1PID}
\end{figure}

\subsection{Test case 1: Outlet flow data compression}	
\label{sec:flow_compression_efficiency}

The $u$ (stream-wise), $v$ (wall-normal), and $w$ (span-wise) velocity fields on the $130 \times 130$ grid slice are captured over $25100$ time steps to form the data to be used in the numerical experiments in this section. 
The four methods utilized for the compression of this flow data are SBR-SVD, ID, sub-sampled ID, and single-pass ID.
In all test cases, the interpolation step in single-pass ID is taken to be a piecewise linear interpolation scheme, though other approaches, e.g., spline interpolation, may be used in its place.

In the left panel of Figure~\ref{fig:1PID}, accuracy results from an experiment performed on  $u$-velocity data collected from the channel flow configuration at 2600 spatial grid points (sampled from the full 16900 in the original data set) over 500 time steps (sampled from the full 25100 in the original data set) are shown. 
The results demonstrate that ID and sub-sampled ID are the most accurate methods as the target rank is increased. SBR-SVD performs almost as well, while the error of single-pass ID saturates after the target rank surpasses 40. This is due to the fact that the interpolation error does not decrease as the target rank is increased. For target rank values 10 and 20, single-pass ID outperforms all other methods, which verifies the claim made in Section~\ref{sec:SPID} that single-pass ID may outperform sub-sampled ID under particular conditions. Th saturation in the single-pass ID results beyond rank $k=40$ is due to the interpolation error $\Vert \bm{E}_I\Vert_2$ in (\ref{thm:SPIDerror}). Due to the nearly indistinguishable performances of the sub-sampled ID and ID algorithms, comparisons against randomized ID are omitted in this test problem. As ID involves no randomization nor sketching of any kind, it is reasonable to expect that any randomized ID algorithm would invariably perform at best the same as ID.

In the right panel of Figure~\ref{fig:1PID}, the errors of single-pass and sub-sampled ID are shown for a range of sub-sampling parameter values. Results clearly indicate that single-pass ID does not match up well to sub-sampled ID in terms of accuracy for large sub-sampling parameter $n/n_c$. The plot is on log-log scale, which also indicates that the rate of convergence of single-pass ID is greater than that of sub-sampled ID with respect to the sub-sampling parameter $n/n_c$. In particular, single-pass ID exhibits a convergence rate of $\mathcal{O}(H^2)$, where $H$ is the grid size used in the piecewise linear interpolation step of single-pass ID.

In Table~\ref{table:runtime}, the speed-up of sub-sampled ID, single-pass ID, and SBR-SVD are reported for different compresion factors, where the speedup of each algorithm is defined as
\begin{equation}
    \text{speedup} := \frac{\text{runtime of slowest algorithm}}{\text{runtime of algorithm}} ,
\end{equation} 
and the compression factor is generally defined as
\begin{equation}
    \text{compression factor} := \frac{\text{number of matrix entries in original data}}{\text{number of matrix entries in compressed data}} ,
\end{equation} 
while the temporal compression factor reported in Table~\ref{table:runtime} is 
\begin{equation}
    \text{temporal compression factor} := \frac{\text{number of rows in original data}}{\text{target rank of approximation}} .
\end{equation} 

Examining the table, sub-sampled ID and single-pass ID (without interpolation) are by far the fastest of the methods considered; they compresses a matrix roughly $60\times$ faster than ID and almost twice as fast as SBR-SVD in the lower compression regimes. Single-pass ID with the interpolation step included, i.e., multiplication by $\bm M$, offers the smallest speedup in all three cases. However, in the context of {\it in situ} compression of simulation data, the interpolation step is an {\it a posteriori} operation in that it is needed when reconstructing the data. If the interpolation step is ignored in this study, the runtimes of the sub-sampled ID and single-pass ID are identical. While the cost of the runtimes of single-pass ID (without interpolation) and sub-sample ID are the same, we remind that sub-sampled ID require an entire simulation to be run again to extract fine-grid row-skeleton snapshots. Our experiment does not take this into account; if a simulation takes days to complete while interpolation takes minutes, the computational savings of single-pass ID over sub-sampled ID will be significant. Note that the difference in performance time between the four methods increases as the compression factor is decreased. 

\begin{figure}[t]
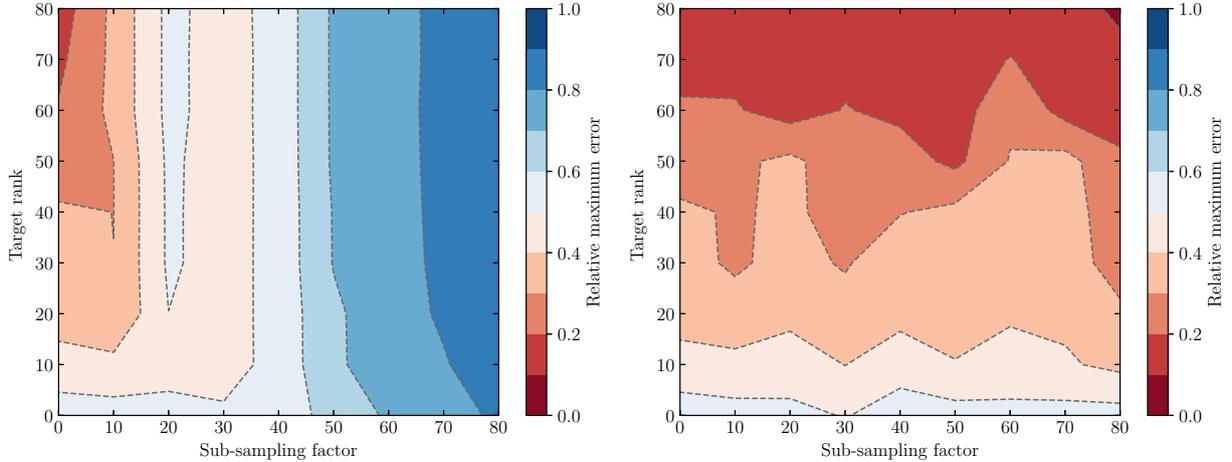

  \centering
  \includegraphics[width=0.49\textwidth]{single_pass_error.pdf}
  \includegraphics[width=0.49\textwidth]{two_pass_error.pdf}
  \caption{Relative maximum error as function of sub-sampling factor and target rank. Data corresponds to outflow stream-wise velocity, $u$, extracted from channel flow at $Re_{\tau}=180$. Left: Single-pass ID. Right: Sub-sampled ID.}	\label{fig:errors}
\end{figure}

\begin{figure}[t]
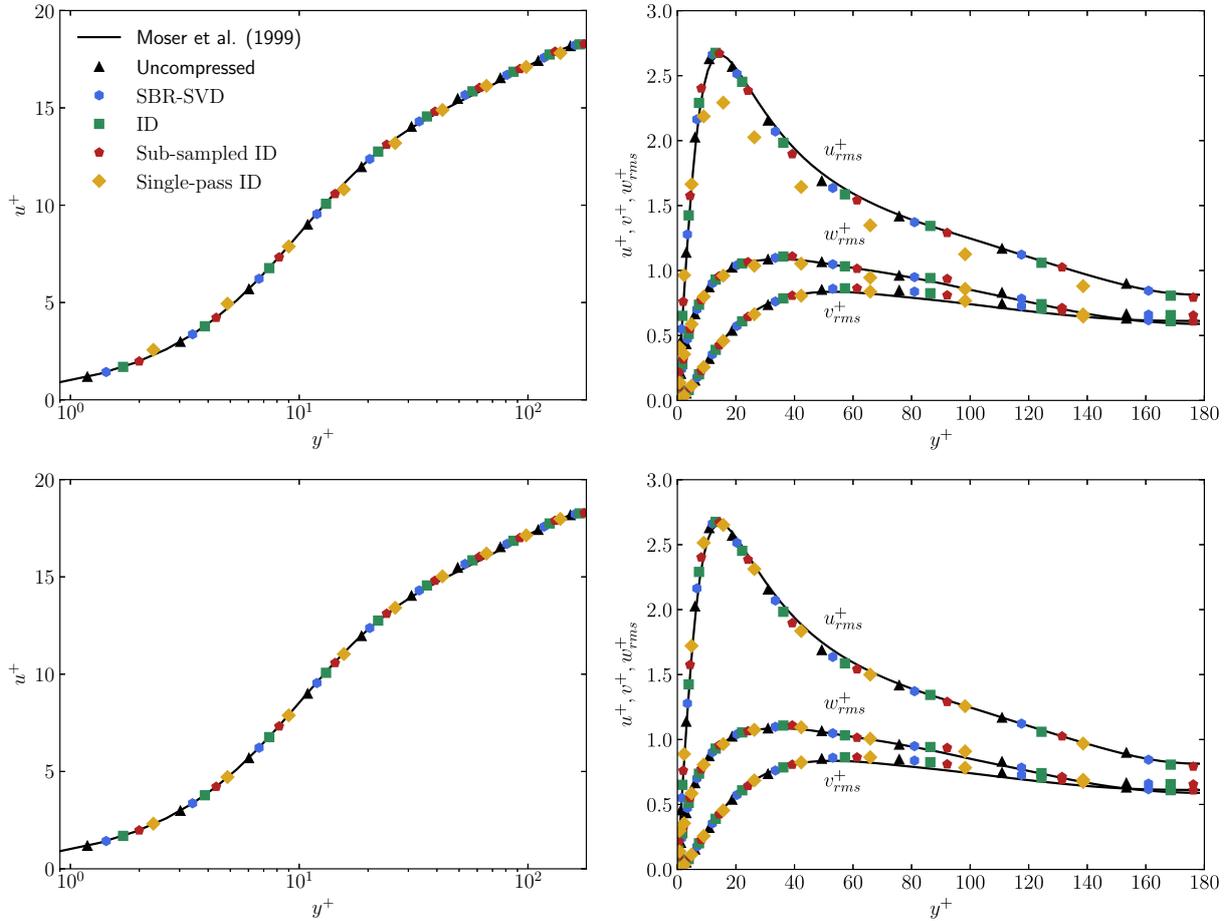

  \centering
  \includegraphics[width=0.48\textwidth]{streamwise_velocity_profile_wall_units_Re_tau_180_12.pdf}
  \includegraphics[width=0.497\textwidth]{rms_velocity_fluctuations_profiles_wall_units_Re_tau_180_12.pdf}
  \includegraphics[width=0.48\textwidth]{streamwise_velocity_profile_wall_units_Re_tau_180.pdf}
  \includegraphics[width=0.497\textwidth]{rms_velocity_fluctuations_profiles_wall_units_Re_tau_180.pdf}	
  \caption{Left column: Mean stream-wise velocity profile (wall units). Right column: Root mean square velocity fluctuations (wall units). Top row: sub-sampling parameter of 12. Bottom row: sub-sampling parameter of 4. Moser et al. DNS~\cite{Moser1999-A} (black solid lines), uncompressed (black triangles), SBR-SVD (blue hexagons), full ID (green squares), sub-sampled ID (red pentagons) and single-pass ID (yellow diamonds).}	\label{fig:particle_unladen_qoi_1}
\end{figure}

\begin{table}[t]
{\small
\centering
\tabcolsep7pt\begin{tabular}{cccc}
\hline
\small{Temporal compression factor ($m/k$)} & \small{SBR-SVD} & \small{Sub-sampled ID}  & \small{Single-pass ID without (and with) interpolation}\\
\hline
$200$ & $8.3$  & $19.6$ & $19.6$ ($5.4$) \\
$100$ & $16.4$ & $41.7$ & $41.7$ ($8.1$) \\
$20$  & $30.3$ & $55.6$ & $55.6$ ($13.2$) \\
\hline
\end{tabular}
\caption{Speedup in runtime using the SBR-SVD, sub-sampled ID, and single-pass ID relative to that of the full ID algorithm for compressing outlet stream-wise velocity data from channel flow at $Re_{\tau}=180$. The cost of second pass over the data is ignored for Sub-sampled ID. The results reported in the parentheses include the interpolation cost of single-pass ID needed only for data reconstruction.}	\label{table:runtime}
}
\end{table}
\begin{table}[t]
{\small
\centering
\tabcolsep7pt\begin{tabular}{ccccccc}
\hline
 & Analytical & Uncompressed & SBR-SVD & Full ID & Sub-sampled ID & Single-pass ID \\
\hline
$Re_{\tau}$ & $180$            & 180.3  & 180.4  & 180.3  & 180.3 & 180.3\\
$Re_{b}$    & $\approx 5600$   & 5569.4 & 5569.5 & 5569.4 & 5569.4 & 5569.4 \\
$C_{f}$     & $\approx 8.2\cdot10^{-3}$ & $8.3\cdot10^{-3}$ & $8.3\cdot10^{-3}$ & $8.3\cdot10^{-3}$ & $8.3\cdot10^{-3}$ & $8.3\cdot10^{-3}$ \\
\hline
\end{tabular}
\caption{Friction, $Re_{\tau}$, and bulk, $Re_{b}$, Reynolds numbers, and skin-friction coefficient, $C_{f}$, of channel flow at $Re_{\tau}=180$ obtained from uncompressed and compressed data. The compressed data has a compression factor of approximately 150 for all the methods considered.}	\label{tab:Reynolds_numbers_skin_friction}
}
\end{table}

In Figure~\ref{fig:errors}, the relationship between the sub-sampling parameter, target rank, and relative entry-wise error of single-pass ID and sub-sampled ID is shown. For a data matrix $\bm{A}$ and its approximation $\hat{\bm{A}}$, the relative entry-wise error is defined as
\begin{equation}
    \max_{i,j} \frac{\vert \bm{A}_{i,j} - \hat{\bm{A}}_{i,j} \vert}{\vert \bm{A}_{i,j}\vert}.
\end{equation}

The data matrix selected for the test is a $5000\times 16900$ $u$-velocity outflow data obtained via sub-sampling time realization of the full-scale channel flow data. The plots show that the error of the single-pass method is heavily dependent on the sub-sampling parameter, which is reflected in the vertical stripe contours in the plot. The error of the sub-sampled ID, on the other hand, depends primarily on the target rank of the approximation, viz. the horizontal stripe contours in the right panel of Figure~\ref{fig:errors}. 

To investigate the impact of the accuracy loss, we use the compressed data (with 25100 time slices) to generate inflow velocity and calculate time- and space-averaged ($x$ and $z$ directions) first- and second-order flow statistics. The quantities of interest considered are the numerical friction, $Re_{\tau}$, bulk Reynolds number, $Re_{b} = 2 u_{b} \delta/\nu$, skin-friction coefficient, $C_{f} = \tau_{w}/(1/2 \rho u_{b}^{2})$, the mean stream-wise velocity profile, $u^{+} = \langle u \rangle/u_{\tau}$, and the root mean square (rms) velocity fluctuations, $u_{rms}^{+}$, $v_{rms}^{+}$, and $w_{rms}^{+}$.
Reference solutions for all these QoIs are available in the literature. 
For instance, $Re_{b}$ can be analytically approximated from $Re_{\tau} \approx 0.09 Re_{b}^{0.88}$~\citep{Pope2000-B}.
Once $Re_{b}$ is known, $C_{f}$ is directly obtained by calculating $u_{b}$ from the bulk Reynolds number definition and noticing that $\tau_{w}$ and $\rho$ depend on $\langle d p /d x \rangle$ and $Re_{\tau}$.
Regarding mean and fluctuation velocity profiles, reference DNS data are widely available, for example, from Moser et al.~\cite{Moser1999-A}.

\begin{figure}
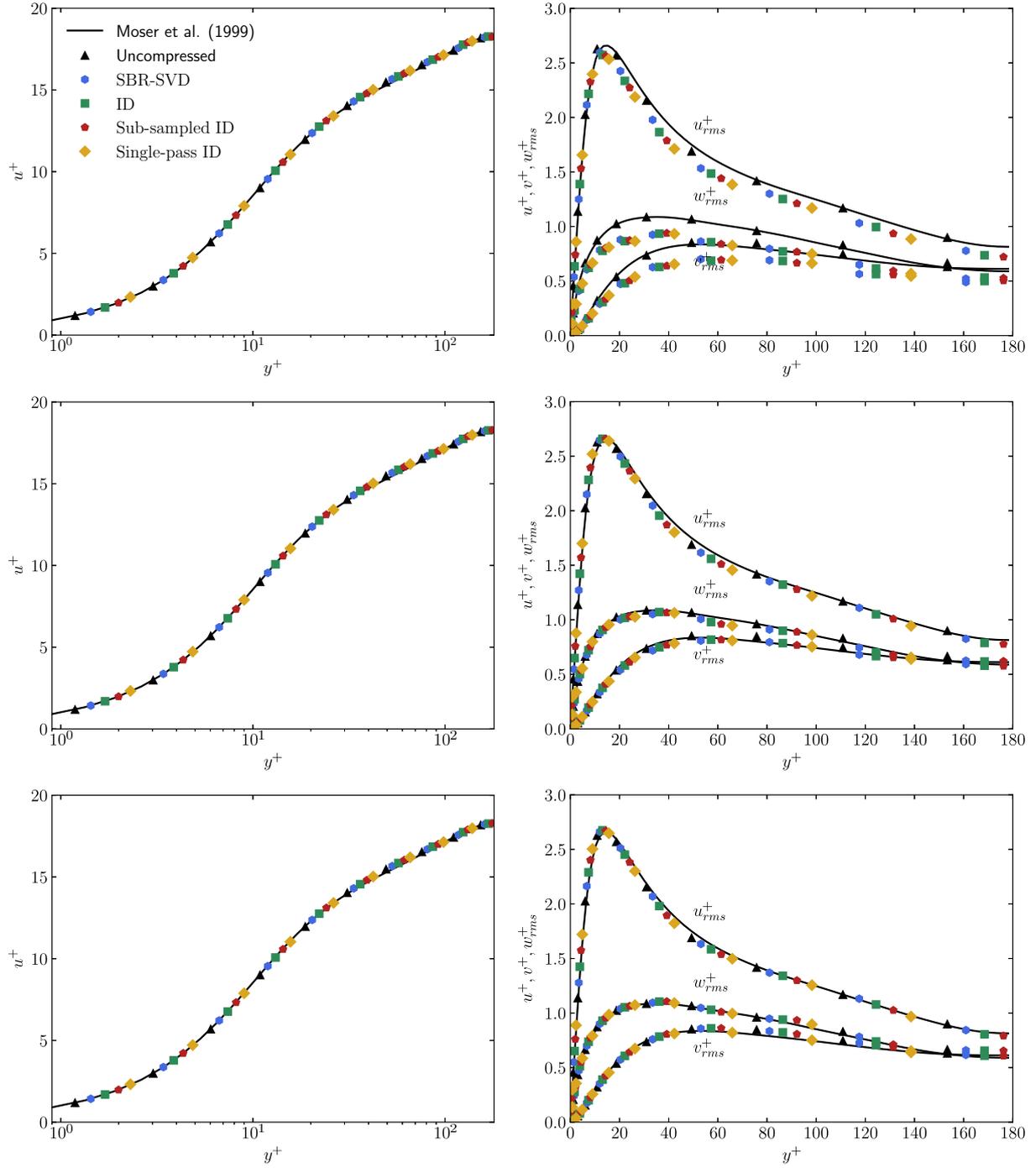

  \centering
  \includegraphics[width=0.48\textwidth]{streamwise_velocity_profile_wall_units_Re_tau_180_r20.pdf}
  \includegraphics[width=0.497\textwidth]{rms_velocity_fluctuations_profiles_wall_units_Re_tau_180_r20.pdf}
  \includegraphics[width=0.48\textwidth]{streamwise_velocity_profile_wall_units_Re_tau_180_r40.pdf}
  \includegraphics[width=0.497\textwidth]{rms_velocity_fluctuations_profiles_wall_units_Re_tau_180_r40.pdf}
  \includegraphics[width=0.48\textwidth]{streamwise_velocity_profile_wall_units_Re_tau_180_r70.pdf}
  \includegraphics[width=0.497\textwidth]{rms_velocity_fluctuations_profiles_wall_units_Re_tau_180_r70.pdf}
  \caption{Left column: Mean stream-wise velocity profile (wall units). Right column: Root mean square velocity fluctuations (wall units). Top row: target rank of 20 (CF = 840). Center row: target rank of 40 (CF = 420). Bottom row: target rank of 70 (CF = 240). Moser et al. DNS~\cite{Moser1999-A} (black solid lines), uncompressed (black triangles), SBR-SVD (blue hexagons), full ID (green squares), sub-sampled ID (red pentagons) and single-pass ID (yellow diamonds).}	\label{fig:particle_unladen_qoi_2}
\end{figure} 

Values computed from compressed data, as well as reference and uncompressed solutions, are shown in Table~\ref{tab:Reynolds_numbers_skin_friction} and Figures~\ref{fig:particle_unladen_qoi_1}~and~\ref{fig:particle_unladen_qoi_2}, with the compressed data requiring approximately 150$\times$ less storage than the original. Examining the results shown in Table~\ref{tab:Reynolds_numbers_skin_friction} and Figures~\ref{fig:particle_unladen_qoi_1}~and~\ref{fig:particle_unladen_qoi_2}, the first observation is that the uncompressed and compressed results agree well with the analytical and DNS reference data.
The second observation is that the four strategies considered provide similar compression accuracies ---\thinspace evaluated through the reconstructed velocity field\thinspace--- and are in agreement with the uncompressed data values.
In particular, the relative errors for the quantities in Table~\ref{tab:Reynolds_numbers_skin_friction} are below $6\cdot 10^{-3} \thinspace \%$, and the mean stream-wise velocity and rms fluctuations accurately reconstruct the uncompressed data solution. These results indicate that in applications involving spatially developing turbulent wall-bounded flow, this compressed data can be reused as inflow in subsequent simulations. The disk memory required to store inflows in these simulations can be reduced by factors exceeding 100, thereby placing less strain on the memory and I/O of the system in use. 

In Figure~\ref{fig:particle_unladen_qoi_1}, the reconstruction of the mean stream-wise velocity and root-mean-squared (rms) velocity fluctuations using the four methods described in Section~\ref{sec:decomposition_algorithms} is shown. In the top two panels a sub-sampling factor of 12 in single-pass ID is used, whereas the bottom two panels correspond to a sub-sampling factor of 4. The results clearly demonstrate that the mean stream-wise velocity profile can be recovered in light of substantial interpolation error in single-pass ID. On the other hand, the rms velocity fluctuations, specifically $u_{rms}^+$, are much more sensitive to the reconstruction accuracy achieved by the methods. This highlights the delicate balance between interpolation error and computational performance inherent to the single-pass ID, as well as its application dependence.

In Figure~\ref{fig:particle_unladen_qoi_2} the same statistics as those presented in Figure~\ref{fig:particle_unladen_qoi_1} are shown for three different compression factor values. In this series of tests, the sub-sampling parameter used in the single-pass ID is set to 4 in all cases, as is presented in the bottom panels of Figure~\ref{fig:particle_unladen_qoi_1}. The compression factor values chosen (i.e., target rank) are 240 (bottom panels), 420 (middle panels) and 840 (top panels). The experiment demonstrates that all four methods can recover first- and second-order statistics to within reasonable accuracy while achieving compression factors exceeding 400. Beyond these values, the accuracy of the statistics recovered from the compressed data begins to drop off significantly. Noteworthy is the performance of the single-pass ID, which performs as well as the other three methods presented, indicating its utility as an online data compression algorithm.

\begin{figure}[t]
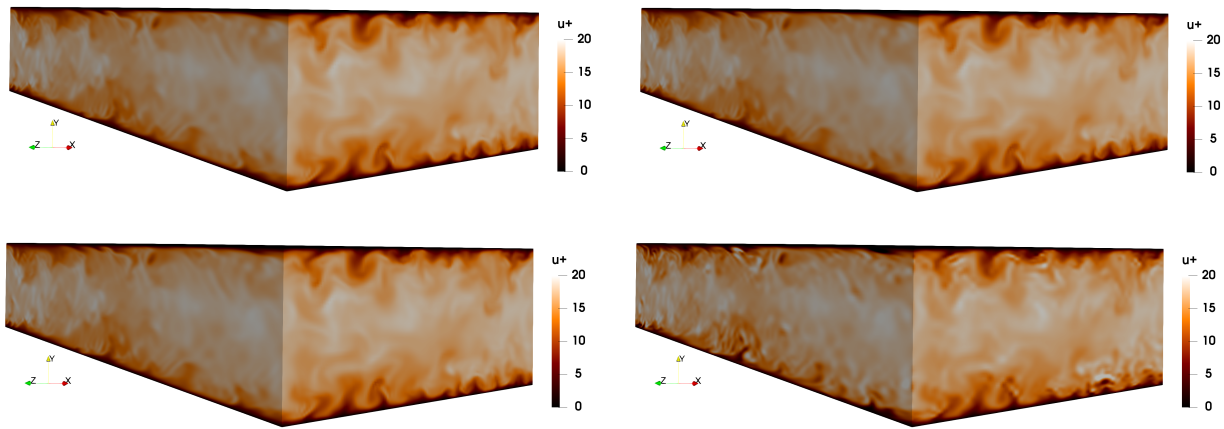

  \centering
  \includegraphics[width=0.475\textwidth]{cfv_full.pdf}
  \hspace{2.5mm}
  \includegraphics[width=0.475\textwidth]{cfv_svd.pdf}

  \vspace{5.0mm}
  
  \includegraphics[width=0.475\textwidth]{cfv_id.pdf}
  \hspace{2.5mm}  
  \includegraphics[width=0.475\textwidth]{cfv_sp_id.pdf}
  \vspace{-2mm}
  \caption{Instantaneous snapshots of channel flow stream-wise velocity $u$ from the uncompressed data (top left), reconstructed using SBR-SVD (top right), reconstructed using sub-sampled ID (bottom left), and reconstructed using single-pass ID (bottom right).}	\label{fig:flowreconstruct}
\end{figure}

\subsection{Test case 2: Volumetric flow data compression}

The volumetric, stream-wise velocity $u$ field of the flow described in Section~\ref{sec:flow_compression_efficiency} is captured over 125 time steps to form the dataset of this experiment.
The $u$-velocity snapshots are collected for the entire volume, $(256 +2) \times (128 + 2) \times (128 +2)$ 3D spatial mesh (4360200 grid points), at each time level resulting in a total $125 \times4360200$ data matrix. The efficiency and accuracy of sub-sampled ID, SBR-SVD, and single-pass ID are analyzed for this test problem. Given the scale of the data matrix, analysis of the performance of the ID algorithm is omitted due to memory overflow on the machine used for the computations.

Figure~\ref{fig:flowreconstruct} presents a 3D (volumetric) visualization of the turbulent flow at a single time instance from the original simulation, as well as three reconstructed using SBR-SVD, sub-sampled ID, and single-pass ID. All three methods provide a qualitatively accurate reconstruction of the full simulation, though single-pass ID presents some noticeably different features at the small-scale level (e.g., near-wall turbulent structures) compared to those in the other three velocity fields. One should note, however, that the single-pass ID reconstruction is achieved with a much higher compression factor ---\thinspace 8 times higher to be precise\thinspace--- as the method compresses not only the time domain of the data but the spatial domain as well.

\begin{figure}[htb]
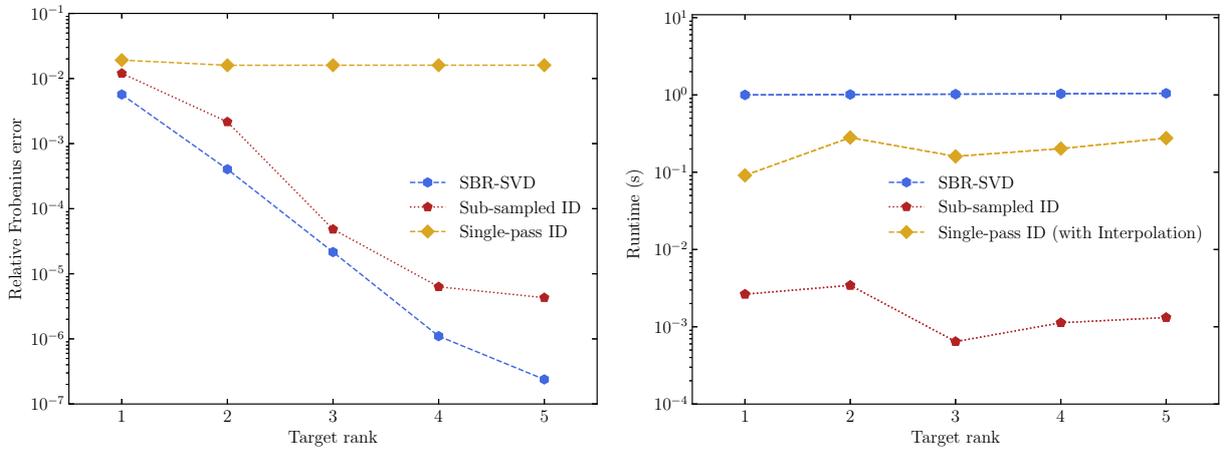

  \centering
  \includegraphics[width=0.49\textwidth]{error_rank_32.pdf}
  \includegraphics[width=0.49\textwidth]{runtime_rank_32.pdf}
  \caption{Left: Errors of three compression methods on volumetric data versus target rank. Right: Runtimes (normalized to the slowest trial of the SBR-SVD) of the three schemes versus target rank. In the right panel, the red line corresponds to the runtimes for both sub-sampled ID and single-pass ID {\it without} interpolation, while the yellow line reports the runtime of single-pass ID including the interpolation step.}	\label{fig:volerrorsruntimes}
\end{figure}

\begin{figure}[htb]
  \centering
  \includegraphics[width=0.65\textwidth]{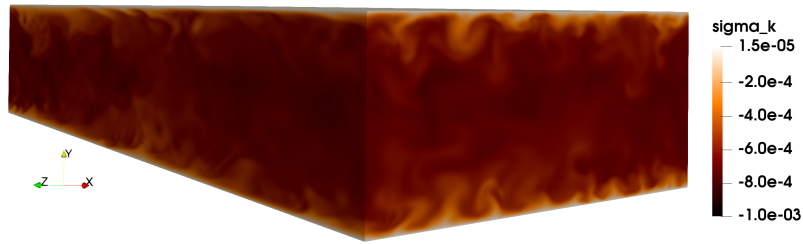}
  \caption{Singular vector associated with the largest singular value of the data matrix in Test case 2 computed via SBR-SVD. }	\label{fig:dominantmodechannel}
\end{figure}

The accuracy and runtime for the different methods are compared in Figure~\ref{fig:volerrorsruntimes}. Runtime is reported as the average over 20 independent trials per target rank value, and the oversampling parameter for SBR-SVD $p$ is set to 20. In single-pass ID, each dimension is sub-sampled by a factor of 2, leading to a spatial compression factor of 8, which compounds on the temporal compression factor achieved via low-rank approximation. The rank $k=1$ approximation for all three methods is roughly equally effective. For higher rank values, sub-sampled ID and SBR-SVD achieve far superior accuracy. Examining the singular vector corresponding to the largest singular value of the data matrix shown in Figure~\ref{fig:dominantmodechannel}, it can be seen that the dominant modes in the system are significantly complex in the spatial dimension. This ensures that the interpolation step in single-pass ID will cause significant loss of accuracy, even at the highest mode.

Runtime performances are analyzed relative to the slowest trial of SBR-SVD.
As depicted in Figure~\ref{fig:volerrorsruntimes}, sub-sampled ID is the fastest across all target rank values, further demonstrating its efficacy as a compression method for turbulent flow data. SBR-SVD is invariably the slowest of the three. Notably, single-pass ID is faster than SBR-SVD for all target ranks, even including the interpolation cost. This demonstrates that with sufficient optimizations, namely the incorporation of random projection prior to the column selection set, the additional cost of the interpolation step will not necessarily make single-pass ID slower than its SVD counterpart. 

Though sub-sampled ID is clearly more accurate than single-pass ID, it is important to remember that in an application where a simulation may take days to complete, or in which the data generated from such a simulation cannot be stored in RAM, a single-pass method, relative to a double-pass counterpart (e.g., sub-sampled ID) will significantly reduce the computational cost of simulation or loading time. Thus the trade-off in accuracy, as well as the additional runtime to interpolate back onto the full mesh ---\thinspace negligible relative to the cost of a high-fidelity simulation of a turbulent flow\thinspace--- may be worth it to practitioners faced with the dilemma of balancing accuracy and computational cost. 

\begin{figure}[htb!]
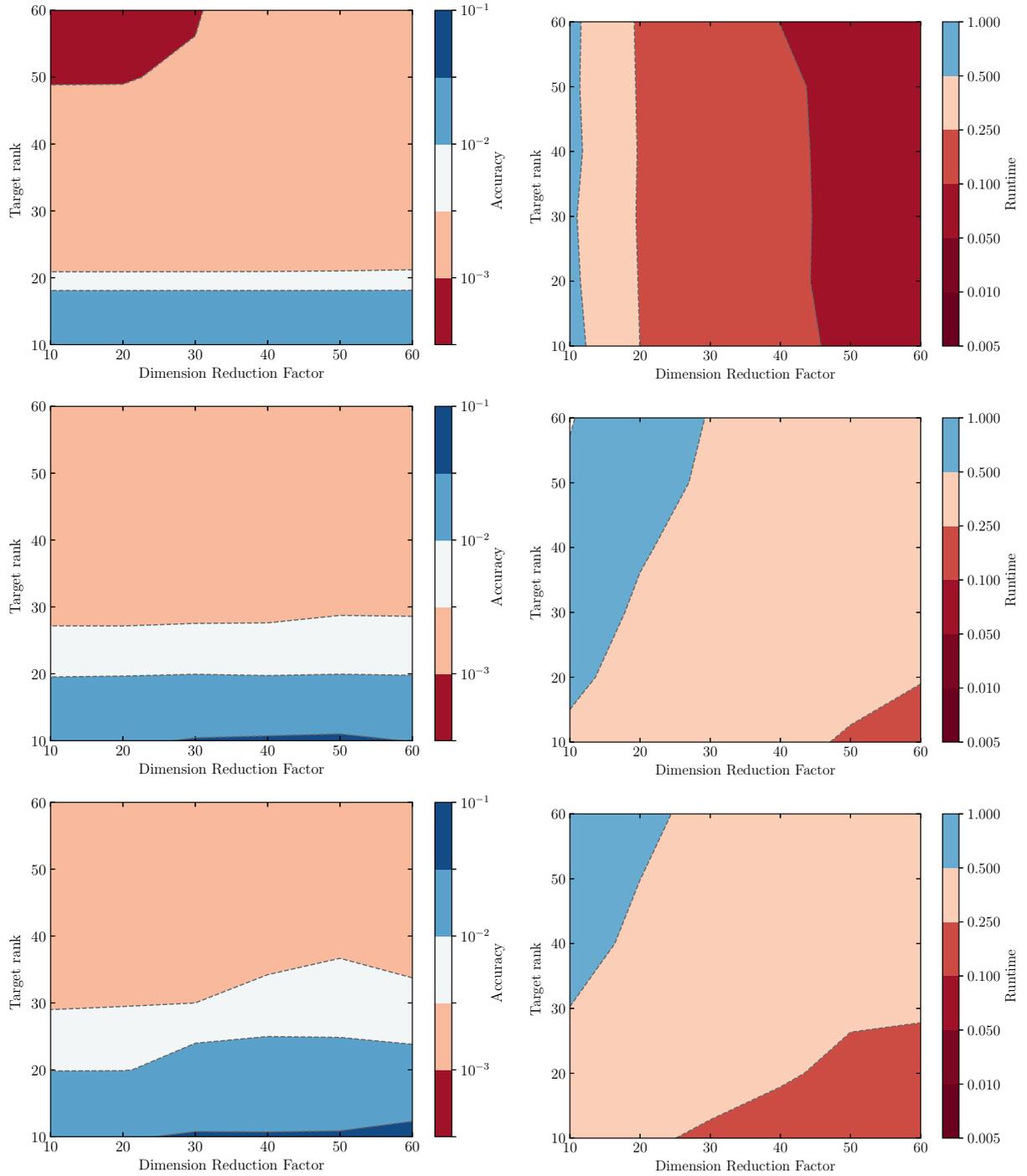

  \centering
  \includegraphics[width=0.49\textwidth]{spsvd_accuracy.pdf}
  \includegraphics[width=0.48\textwidth]{spsvd_runtime.pdf}
  \includegraphics[width=0.49\textwidth]{subid_accuracy.pdf}
  \includegraphics[width=0.48\textwidth]{subid_runtime.pdf}
  \includegraphics[width=0.49\textwidth]{id_accuracy.pdf}
  \includegraphics[width=0.48\textwidth]{id_runtime.pdf}
  \caption{Accuracy (relative Frobenius error) and runtime (normalized by the largest value, i.e., slowest runtime: Randomized ID, dimension reduction factor of 10, target rank of 60) for individually traced particle $v$-velocity data at $St^+ = 1$. The dimension reduction factor is defined as the ratio $n/l$ for SBR-SVD and Randomized ID, and $n/n_c$ for sub-sampled ID. Top row: Accuracy (left) and runtime (right) of SBR-SVD. Center: Randomized ID. Bottom: Sub-sampled ID.}	\label{fig:runtime_accuracy}
\end{figure}

\subsection{Test case 3: Particle data compression efficiency and accuracy}
\label{sec:particle_compression_efficiency}

Wall-normal velocity, $v$, for $200000$ particles traced over 10000 time-steps is selected as test data. Three cases are considered in this section, corresponding to Stokes numbers $St^{+} = 0,1,10$. In particle-laden turbulent flows, the Stokes number characterizes the particle response time with respect to the flow time-scale.
Particles follow the turbulent motions at small $St^{+}$, whereas inertial forces dominate particle trajectories at large $St^{+}$.
As a result, particles at high Stokes numbers are less prone to multi-scale behavior, which leads to enhanced compression accuracy for the algorithms. 

SBR-SVD, ID, randomized ID (Algorithm~\ref{alg:randid}), and sub-sampled ID are applied to the individually traced particle data extracted from a particle-laden turbulent flow. In the randomized ID, random projection via multiplication by a matrix with i.i.d. Gaussian entries is used as opposed to the deterministic sub-sampling step in sub-sampled ID to reduce the dimension of the input data (see Step~\ref{step:randprojectID} in Algorithm~\ref{alg:randid}). Single-pass ID is not utilized in this section because the individually traced particle data lacks an associated mesh for interpolation, which single-pass ID requires.  

\begin{table}[htb]
\centering
\tabcolsep7pt\begin{tabular}{cccc}
\hline
\small{Stokes number} & \small{Method} & \small{Y-position data compression error} & \small{Runtime (s)} \\
\hline
& \small{SBR-SVD}& $3.6\cdot10^{-6}$ & \textcolor{white}{1} 1.3 \\
$St^+ = 0$ &\small{ID} &  $4.6\cdot10^{-6}$ & \textcolor{white}{1} 72.2 \\
& \small{Randomized ID}&  $6.0\cdot10^{-6}$ & \textcolor{white}{1} 4.6 \\
& \small{Sub-sampled ID}& $7.6\cdot10^{-6}$ & \textcolor{white}{5} 4.3 \\
\hline
& \small{SBR-SVD}&  $1.1\cdot10^{-6}$  & \textcolor{white}{1} 1.2 \\
$St^+ = 1$ & \small{ID} & $1.2\cdot10^{-6}$ & \textcolor{white}{1} 73.3\\
& \small{Randomized ID}&  $1.5\cdot10^{-6}$ & \textcolor{white}{1} 5.1 \\
& \small{Sub-sampled ID}&  $2.3\cdot10^{-6}$ & \textcolor{white}{5} 4.5 \\
\hline
& \small{SBR-SVD} & $2.6\cdot10^{-6}$ & \textcolor{white}{5} 1.2 \\
$St^+ = 10$ &\small{ID} & $2.7\cdot10^{-6}$ & \textcolor{white}{1} 82.5 \\
& \small{Randomized ID}&  $3.4\cdot10^{-6}$ & \textcolor{white}{5} 5.0 \\
& \small{Sub-sampled ID}&  $2.1\cdot10^{-5}$ & \textcolor{white}{5} 4.4 \\
\hline
\end{tabular}
\caption{Compression error and runtime achieved by the SBR-SVD, ID, randomized ID, and sub-sampled ID for $St^{+}=0, 1, 10$ $Y$-position data for target rank $100$ (temporal compression factor 100). Compression errors are computed for $10000\times 5000$ matrices of the particles wall-normal position in terms of relative Frobenius norm error. Runtimes are given in seconds. In sub-sampled ID, a sub-sampling factor of $n/n_c = 20$ is used. In randomized ID and SBR-SVD, the matrix is projected to the same dimension as in sub-sampled ID, using a Gaussian sampling matrix instead. In SBR-SVD, the QR decomposition is computed in blocks of size 110.}	\label{tab:stokesposition}
\end{table}

\begin{table}[htb!]
\centering
\tabcolsep7pt\begin{tabular}{cccc}
\hline
\small{Stokes number} & \small{Method} & \small{$V$-velocity data compression error} & \small{Runtime (s)} \\
\hline
& \small{SBR-SVD}& $4.9\cdot10^{-3}$ & \textcolor{white}{1} 1.2  \\
$St^+ = 0$ &\small{ID} &  $6.0\cdot10^{-3}$ & \textcolor{white}{1} 72.7 \\
& \small{Randomized ID}&  $7.7\cdot10^{-3}$ & \textcolor{white}{1} 5.3 \\
& \small{Sub-sampled ID}& $9.4\cdot10^{-3}$ & \textcolor{white}{5} 4.5 \\
\hline
& \small{SBR-SVD}&  $6.2\cdot10^{-4}$  & \textcolor{white}{1} 1.3 \\
$St^+ = 1$ & \small{ID} & $8.2\cdot10^{-4}$ & \textcolor{white}{1} 73.4\\
& \small{Randomized ID}&  $1.0\cdot10^{-3}$ & \textcolor{white}{1} 4.7 \\
& \small{Sub-sampled ID}&  $1.7\cdot10^{-3}$ & \textcolor{white}{5} 4.1 \\
\hline
& \small{SBR-SVD} & $6.9\cdot10^{-5}$ & \textcolor{white}{5} 1.1 \\
$St^+ = 10$ &\small{ID} & $9.6\cdot10^{-5}$ & \textcolor{white}{1} 75.7 \\
& \small{Randomized ID}&  $1.2\cdot10^{-4}$ & \textcolor{white}{5} 5.0 \\
& \small{Sub-sampled ID}&  $4.7\cdot10^{-3}$ & \textcolor{white}{5} 4.4 \\
\hline
\end{tabular}
\caption{Compression error and runtime achieved by the SBR-SVD, ID, randomized ID, and sub-sampled ID for $St^{+}=0, 1, 10$ $V$-velocity data for target rank $100$ (temporal compression factor 100). Compression errors are computed for $10000\times 5000$ matrices of the particles $v$-velocities in terms of relative Frobenius norm error. Runtimes are given in seconds. In sub-sampled ID, a sub-sampling factor of $n/n_c = 20$ is used. In randomized ID and SBR-SVD, the matrix is projected to the same dimension as in sub-sampled ID, using a Gaussian sampling matrix instead. In SBR-SVD, the QR decomposition is computed in blocks of size 110.}	\label{tab:stokesvelocity}
\end{table}

In Table~\ref{tab:stokesposition}, compression error and runtime results are reported for the four methods for $y$-position data of $5000$ particles tracked over $10000$ time-steps. In all three Stokes regimes, SBR-SVD is the fastest and most accurate. ID is the slowest, while the second most accurate, and randomized ID and sub-sampled ID are the least accurate but achieve runtimes close to those of SBR-SVD. Notice that in Algorithm \ref{alg:subsampid} for sub-sampled ID, the standard ID is performed on the sketch matrix $\bm{A}_c$ (Step 4). We anticipate the runtime of the sub-sampled ID presented in Table~\ref{tab:stokesposition} and in the subsequent results will be improved by performing randomized ID on $\bm{A}_c$ instead. Moreover, randomized ID and sub-sampled ID achieve similar accuracy, except in the $St^+=10$ case, where randomized ID is more accurate. In Table~\ref{tab:stokesvelocity}, the same metrics are applied to access the compression performance of the algorithms on $v$-velocity data collected from the same $5000$ particles as in Table~\ref{tab:stokesposition}. The same relative performances are observed as before; the performance gap between sub-sampled ID and the other three methods is particularly pronounced when $St^+ = 10$. Again, SBR-SVD is the fastest and most accurate method, while ID is by far the slowest. 

In Figure~\ref{fig:runtime_accuracy}, accuracy and runtime (normalized by the largest value across all algorithms and test cases) of SBR-SVD, randomized ID, and sub-sampled ID algorithms are shown in two contour plots for the case of $St^{+}=1$ with $5000$ particles selected from the original $200000$. The dimension reduction factor given as the horizontal axes of the plots is defined as the ratio $n/l$, for randomized ID and SBR-SVD, and $n/n_c$ for sub-sampled ID. All three methods prove to be robust with respect to dimension reduction, which leads to savings in runtime with insignificant loss of accuracy. In this test case, SBR-SVD outperforms all other algorithms in terms of accuracy as well as in terms of runtime. 

In Table~\ref{tab:stokesposition20000}, compression error and runtime results are reported for the four methods for $y$-position data of $20000$ particles tracked over $10000$ time-steps. In all three Stokes regimes, sub-sampled ID is the fastest of the four methods, while SBR-SVD is the most accurate. ID is by far the slowest, and the much faster randomized ID and sub-sampled ID are the least accurate. Randomized ID and sub-sampled ID achieve similar errors across Stokes regimes, with randomized ID slightly more accurate in each of the three test cases, in particular $St^+ = 10$. In Table~\ref{tab:stokesvelocity20000}, the same metrics are used to access the performance of the algorithm applied to the compression of $v$-velocity data collected from the same $20000$ particles as in Table~\ref{tab:stokesposition20000}. In this case, sub-sampled ID is again the fastest and SBR-SVD the most accurate of the three methods for all Stokes regimes. The performance gap between sub-sampled ID and the other schemes is again significant when $St^+ = 10$, as was also observed in the $5000$ particle test case (see Table~\ref{tab:stokesvelocity}).

A key difference between the $5000$ particle and $20000$ particle test cases is that sub-sampled ID is faster than all three other methods in the $20000$ particle case, while it is slower than SBR-SVD in the $5000$ particle particle test case. This is due to the fact that in the $5000$ particle case (considered in Tables~\ref{tab:stokesposition} and~\ref{tab:stokesvelocity}), the spatial domain of the matrix is much smaller ($1/2$ the size of the temporal domain), limiting the speedup to be gained by sub-sampling the matrix -- instead of using random projection -- prior to compression. When the ratio of the number of particles to the number of timesteps is increased to $2$ as in the $20000$ particle test case reported in Tables~\ref{tab:stokesposition20000} and~\ref{tab:stokesvelocity20000}, sub-sampled ID offers greater speedup relative to the other schemes. More generally, sub-sampled ID is ideally suited to PDE data in which the size of the spatial domain greatly exceeds that of the temporal domain, as was observed in the results of Test Case 2.

\begin{table}[htb!]
\centering
\tabcolsep7pt\begin{tabular}{cccc}
\hline
\small{Stokes number} & \small{Method} & \small{Y-position data compression error} & \small{Runtime (s)} \\
\hline
& \small{SBR-SVD}& $3.0 \times 10^{-6}$  & \textcolor{white}{1} 4.3  \\
$St^+ = 0$ &\small{ID} &  $3.9 \times 10^{-6}$ & \textcolor{white}{1}  284.3 \\
& \small{Randomized ID} & $5.0 \times 10^{-6}$  & \textcolor{white}{1} 5.9 \\
& \small{Sub-sampled ID} & $6.7 \times 10^{-6}$ & \textcolor{white}{5} 4.0  \\
\hline
& \small{SBR-SVD}&  $5.8 \times 10^{-7}$   & \textcolor{white}{1} 4.2  \\
$St^+ = 1$ & \small{ID} & $6.4 \times 10^{-7}$ & \textcolor{white}{1} 319.2\\
& \small{Randomized ID}& $8.2 \times 10^{-7}$   & \textcolor{white}{1} 5.9  \\
& \small{Sub-sampled ID}& $1.0 \times 10^{-6}$   & \textcolor{white}{5} 4.0 \\
\hline
& \small{SBR-SVD} & $6.0 \times 10^{-7}$ & \textcolor{white}{5} 4.3 \\
$St^+ = 10$ &\small{ID} & $6.2 \times 10^{-7}$ & \textcolor{white}{1} 337.6\\
& \small{Randomized ID}& $8.0 \times 10^{-7}$  & \textcolor{white}{5} 5.9 \\
& \small{Sub-sampled ID}&  $8.3 \times 10^{-6}$ & \textcolor{white}{5}  4.0\\
\hline
\end{tabular}
\caption{Compression error and runtime achieved by the SBR-SVD, ID, randomized ID, and sub-sampled ID for $St^{+}=0, 1, 10$ $Y$-position data for target rank $100$ (temporal compression factor 100). Compression errors are computed for $10000\times 20000$ matrices of the particles $y$-positions in terms of relative Frobenius norm error. Runtimes are given in seconds. In sub-sampled ID, a sub-sampling factor of $n/n_c = 80$ is used. In randomized ID and SBR-SVD, the matrix is projected to the same dimension as in sub-sampled ID, using a Gaussian sampling matrix instead. In SBR-SVD, the QR decomposition is computed in blocks of size 110.}	\label{tab:stokesposition20000}
\end{table}

\begin{table}[htb!]
\centering
\tabcolsep7pt\begin{tabular}{cccc}
\hline
\small{Stokes number} & \small{Method} & \small{V-velocity data compression error} & \small{Runtime (s)} \\
\hline
& \small{SBR-SVD}& $5.1 \times 10^{-3}$ & \textcolor{white}{1} 4.2\\
$St^+ = 0$ &\small{ID} & $6.4 \times 10^{-3}$  & \textcolor{white}{1} 331.4 \\
& \small{Randomized ID}&  $8.2 \times 10^{-3}$ & \textcolor{white}{1} 5.8 \\
& \small{Sub-sampled ID}& $1.0 \times 10^{-2}$ & \textcolor{white}{5} 4.1 \\
\hline
& \small{SBR-SVD}&  $4.4 \times 10^{-4}$  & \textcolor{white}{1}  4.6 \\
$St^+ = 1$ & \small{ID} & $5.9 \times 10^{-4}$ & \textcolor{white}{1} 316.5\\
& \small{Randomized ID}&  $7.6 \times 10^{-4}$ & \textcolor{white}{1} 6.4 \\
& \small{Sub-sampled ID}& $9.2 \times 10^{-4}$ & \textcolor{white}{5}  4.4\\
\hline
& \small{SBR-SVD} & $1.4 \times 10^{-4}$  & \textcolor{white}{5} 4.9 \\
$St^+ = 10$ &\small{ID} & $2.9 \times 10^{-4}$  & \textcolor{white}{1} 344.1 \\
& \small{Randomized ID}&  $3.9 \times 10^{-4}$  & \textcolor{white}{5}  6.1 \\
& \small{Sub-sampled ID}&  $7.3 \times 10^{-3}$  & \textcolor{white}{5} 4.1  \\
\hline
\end{tabular}
\caption{Compression error and runtime achieved by the SBR-SVD, ID, randomized ID, and sub-sampled ID for $St^{+}=0, 1, 10$ $V$-velocity data for target rank $100$ (temporal compression factor 100). Compression errors are computed for $10000\times 20000$ matrices of the particles $v$-velocities in terms of relative Frobenius norm error. Runtimes are given in seconds. In sub-sampled ID, a sub-sampling factor of $n/n_c = 80$ is used. In randomized ID and SBR-SVD, the matrix is projected to the same dimension as in sub-sampled ID, using a Gaussian sampling matrix instead. In SBR-SVD, the QR decomposition is computed in blocks of size 110.}	\label{tab:stokesvelocity20000}
\end{table}

Next, the compression accuracy of SBR-SVD, randomized ID, and sub-sampled ID applied to the reconstruction of time-averaged wall-normal particle concentrations and velocities for the full 200000 particle data is investigated, depicted in Figure~\ref{fig:particle_laden_qoi}. The $10000 \times 200000$ matrix is first separated into blocks of size $ 10000 \times 5000$. Then, each block is compressed individually. Empirically, this was found to enhance the compressibility of the dataset.
SBR-SVD, randomized ID and sub-sampled ID are the three methods considered in this case, with the target rank for each block set to $k=100$ in all test cases. The simulation in question consists of 10000 timesteps, therefore, all of the algorithms attain temporal compression factors of 100. The accuracy in compressing particle positions is quantified by the normalized concentration (with respect to the centerline of the channel), $C/C_{cl}$, while the compression of particles velocity, $v_{p}^{+}=v_{p}/u_{\tau}$, is evaluated by means of scaled probability density functions (PDF) in the viscous sublayer ($y^+$ $<$ 5), buffer layer (5 $<$ $y^+$ $<$ 30), and log-law region ($y^+$ $>$ 30). The results shown in the plots demonstrate that SBR-SVD, randomized ID, and sub-sampled ID are able to compress the Lagrangian data without significant loss of reconstruction accuracy.

\begin{figure}[htb!]
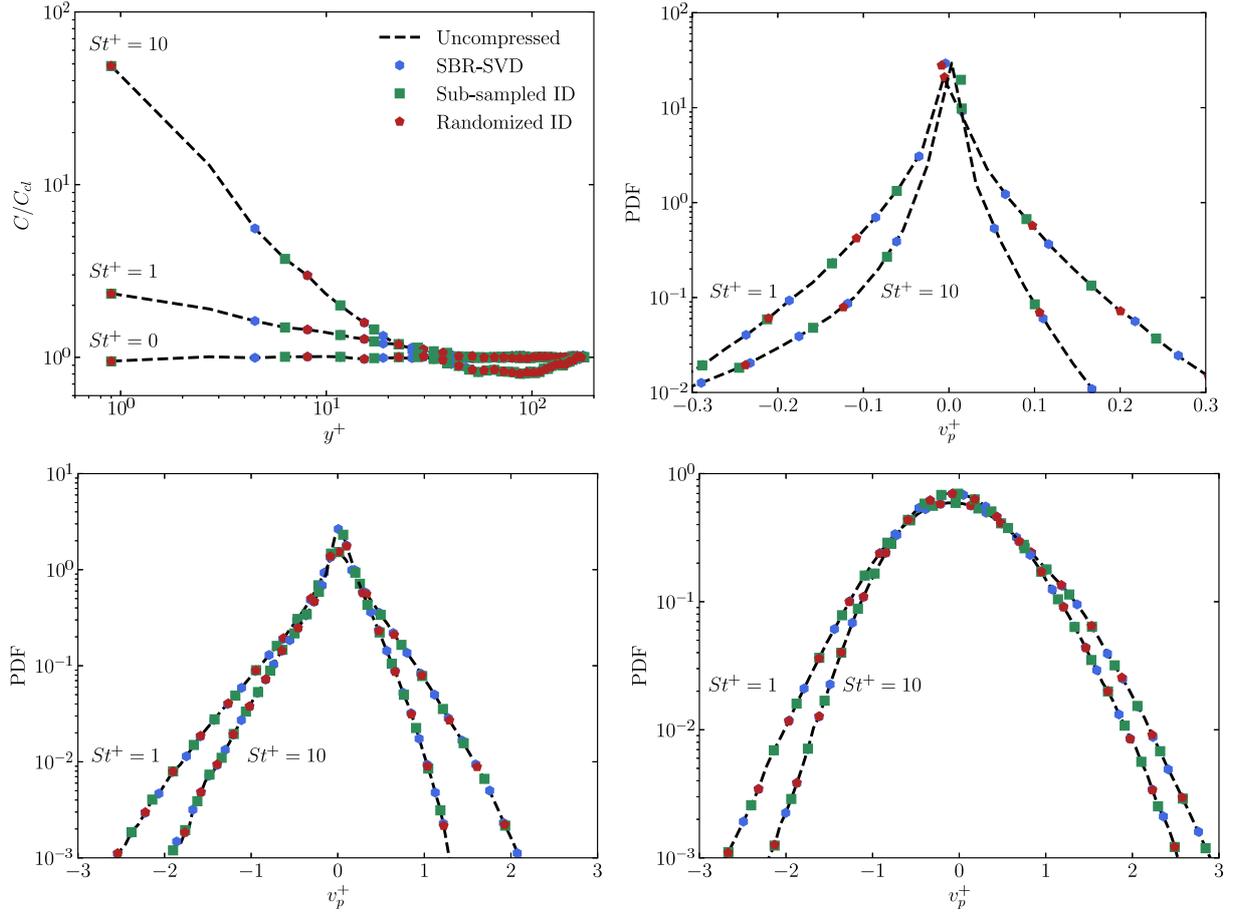

  \centering
  \includegraphics[width=0.48\textwidth]{steady_state_mean_normalized_particle_concentration.pdf}
  \includegraphics[width=0.49\textwidth]{pdf_normalized_wall_normal_velocity_viscous_sublayer.pdf}
  \includegraphics[width=0.49\textwidth]{pdf_normalized_wall_normal_velocity_buffer_layer.pdf}
  \includegraphics[width=0.49\textwidth]{pdf_normalized_wall_normal_velocity_log-law_region.pdf}
  \caption{Top left: Steady-state mean normalized particle concentration in the wall-normal direction (wall-units). Top right: Scaled PDF of wall-normal particle velocity (wall-units) in the viscous sublayer ($y^+ < 5$). Bottom left: Scaled PDF of wall-normal particle velocity (wall-units) in the buffer layer ($5 < y^+ < 30$). Bottom right: Scaled PDF of wall-normal particle velocity (wall-units) in the log-law region ($y^+ > 30$). For SBR-SVD, sub-sampled ID, and randomized ID, results are recovered from compressed data with temporal compression factor $100$, corresponding to a target rank of $k=100$ used on each $10000 \times 5000$ matrix block.}	\label{fig:particle_laden_qoi}
\end{figure}

\section{Conclusions}	
\label{sec:conclusions}

Inspired by growing issues in memory capacity and I/O emerging due to drastically increased floating-point performance on modern high-performance computers, pass-efficient matrix decomposition algorithms for the utility of compression of large-scale streaming PDE data are investigated in this work.  Within the broad field of matrix decomposition methods, this work focused on four algorithms: a single-pass randomized SVD-based algorithm (SBR-SVD), and three variants of interpolative decomposition, namely, randomized ID, (grid) sub-sampled ID, and (grid sub-sampled) single-pass ID. Prior to this work, all existing ID algorithms required at least two passes over the input data. The novel single-pass ID addresses this gap within this class of matrix factorization. At the core of the new ID methods, sub-sampled ID and single-pass ID, is the utility of a sketch based on coarse description of the data associated with the computational grid which enables a faster construction of ID. Bounds on the approximation error of these variants of ID are derived and insight into the role of the coarse grid sketch is presented.

On the empirical side, these low-rank factorization techniques are employed for the compression of two turbulent channel flows, one unladen and another laden with particles. Planar, volumetric, and discrete particle data are considered. In all cases the data admits low-rank representation and these methods are able to exploit this structure to achieve significant data compression. Among the one-pass (streaming) algorithms in the unladen turbulent channel flow test case, SBR-SVD is typically faster and more accurate than single-pass ID, but both are effective in the compression of the outlet velocity dataset as well as the volumetric dataset. Among the double-pass ID algorithms, sub-sampled ID performs as well as the ID in the recovery of flow statistics in the first two test cases, indicating promise for its application in other problems with similar low-rank structure and smooth spatial variation. SBR-SVD and sub-sampled ID feature significant reductions in runtime with insignificant losses in accuracy as compared to ID in all test cases. Individually traced particle data, lacking an associated mesh, remains an application to which single-pass ID is not adapted. In situations where the size of the spatial grid or the number of particles is large (relative to the number of time snapshots), we observe faster runtime achieved by sub-sampled ID as compared with randomized ID or SBR-SVD, thus justifying the utility of sub-sampled ID in high-dimensional data compression.
 
Despite their generally inferior performance in terms of runtime and accuracy on the three test cases considered in this work, the ID algorithms have  key benefits over SBR-SVD. One is that all three variants of ID examined in this work may take as input an error tolerance. As a result, the three ID algorithms do not require \textit{a priori} knowledge of the compressibility, i.e., the numerical rank, of a dataset. Work has been done recently to develop randomized SVD algorithms which adaptively determine the rank of a dataset~\cite{ji2016rank}, though these methods have yet to be developed in a single-pass framework to the best of the authors' knowledge. ID also generates an intuitive decomposition of a matrix, as it expresses the row space of a matrix in a basis of a subset of the rows themselves. This makes it a useful tool for domain experts who may find the SVD more difficult to interpret in the context of their raw data. For example, in a turbulent flow simulation whose time realizations arrive in RAM as row vectors, row ID can be interpreted as identifying the {\it most important} snapshots of the entire solution, in the sense that they form a near-optimal self-expressive basis to reconstruct the data. The SVD, on the other hand, does not provide exact snapshot reconstruction at any point in the temporal domain. 

An ongoing extension of this work is the development of scalable, parallel implementations of these methods directly withing a flow solver, with the aim to enable {\it in situ} compression. We anticipate some of the observations made regarding the relative runtime of the compression algorithms considered here may change when executed in parallel. Other future avenues of research directions include the utility of i) higher order interpolation schemes (potentially on unstructured grids) to improve the accuracy of single-pass ID and ii) spatially compressing the row skeleton generated by the ID using, e.g., compressed sensing algorithms~\cite{baraniuk2010model,jokar2010sparse} to further enhance current compression factors.

\section*{Acknowledgments} 

This work was funded by the United States Department of Energy's National Nuclear Security Administration under the Predictive Science Academic Alliance Program (PSAAP) II at Stanford University, Grant DE-NA-0002373. A.D. acknowledges funding by the US Department of Energy's Office of Science Advanced Scientific Computing Research, Award DE-SC0006402, and National Science Foundation, Grant CMMI-145460. L.J. acknowledges funding by the Beatriz Galindo Program (Distinguished Researcher, BGP18/00026) of the Ministerio de Ciencia, Innovaci\'on y Universidades, Spain.

An award of computer time was provided by the ASCR Leadership Computing Challenge program. This research used resources of the Argonne Leadership Computing Facility, which is a Department of Energy's Office of Science User Facility supported under contract DE-AC02-06CH11357. This research also used resources of the Oak Ridge Leadership Computing Facility, which is a Department of Energy's Office of Science User Facility supported under contract DE-AC05-00OR22725. A.D. acknowledges the fruitful discussions with Prof. Kenneth Jansen and Stephen Becker from CU Boulder.

\bibliography{main}
\bibliographystyle{model1-num-names}

\end{document}